\documentclass[11pt, reqno]{amsart}
\usepackage{amssymb}
\usepackage[utf8]{inputenc}
\usepackage{graphicx}
\usepackage{xcolor}
\usepackage{enumerate}
\usepackage[pagebackref, colorlinks = true, linkcolor = blue, urlcolor  = blue, citecolor = red]{hyperref}
\usepackage[margin=1in]{geometry}
\usepackage{framed}
\usepackage{enumitem}
\usepackage{multirow}

\makeatletter
\newcommand{\mylabel}[2]{#2\def\@currentlabel{#2}\label{#1}}

\newtheorem{theorem}{Theorem}[section]
\newtheorem*{theorem*}{Theorem}
\newtheorem{definition}[theorem]{Definition}
\newtheorem{proposition}[theorem]{Proposition}

\newtheorem{lemma}[theorem]{Lemma}
\newtheorem{remark}[theorem]{Remark}
\newtheorem{example}[theorem]{Example}

\begin{document}

\title{On bipartite unitary matrices generating subalgebra--preserving quantum operations}

\author{Tristan Benoist}
\address{CNRS, Laboratoire de Physique Th\'eorique, Toulouse, France}
\email{tristan.benoist@irsamc.ups-tlse.fr}

\author{Ion Nechita}
\address{CNRS, Laboratoire de Physique Th\'eorique, Toulouse, France}
\email{nechita@irsamc.ups-tlse.fr}

\subjclass[2000]{}
\keywords{}

\begin{abstract}
We study the structure of bipartite unitary operators which generate via the Stinespring dilation theorem, quantum operations preserving some given matrix algebra, independently of the ancilla state. We characterize completely the unitary operators preserving diagonal, block-diagonal, and tensor product algebras. Some unexpected connections with the theory of quantum Latin squares are explored, and we introduce and study a Sinkhorn--like algorithm used to randomly generate quantum Latin squares.
\end{abstract}

\date{\today}

\maketitle

\tableofcontents

\section{Introduction}
States of finite dimensional quantum systems are described by trace one, positive semidefinite matrices called \emph{density matrices}. Their evolution is given by completely positive trace preserving maps (CPTP), also called \emph{quantum channels}. For \emph{closed} systems, the Quantum channel consists in the left and right multiplication by a unitary matrix and its hermitian conjugate respectively. Quantum channels describing the evolution of a system in contact with an environment, the so--called \emph{open} quantum systems, are given by more general CPTP maps. The famous Stinespring dilation theorem \cite{sti} connects the mathematical definition of open system evolution quantum channels with their physical interpretation: the evolution of an open quantum system can be seen as the closed evolution of a [system + environment] bipartite system, followed by the discarding (or partial tracing) of the environment:
$$\rho\mapsto T(\rho) = [\operatorname{id} \otimes \operatorname{Tr}](U(\rho \otimes \beta) U^*).$$
Hence, the quantum evolution depends on two physical relevant parameters: the global evolution operator $U$ and the state of the environment subsystem $\beta$. We would like to identify the part played by the unitary interaction $U$ in the properties of the quantum channel $T$. Namely, we would like to characterize families of bipartite unitary matrices $U$ such that the quantum channels $T$ they generate have some prescribed properties, independently of the environment state $\beta$. In this work, we answer this question for channels preserving some special sub--algebras of states/observables. Different channel properties were studied under the same framework in \cite{dnp} (see also \cite{bar}). 

Part of the initial motivation of our study originates in the characterization of quantum channels preserving a set of \emph{pointer} states. These CPTP maps appear in the context of quantum non demolition measurements \cite{bb,bbb,bpe,adler,vanhandel,stockton}. The following proposition is a discrete time version of \cite[Theorem 3]{bpe}. It is obtained through the repeated application of Theorem \ref{thm:Zero_bloc_unitary}.
\begin{proposition}\label{prop:preserves-basis}
Let $\{e_i\}_{i=1}^n$ be a fixed orthonormal basis of $\mathbb C^n$, and $U \in \mathcal U_{nk}$ a bipartite unitary operator. The following statements are equivalent:
\begin{enumerate}
\item There exists a set of quantum states $\mathcal B$ which spans $M_k(\mathbb C)$ such that, for any state $\beta \in \mathcal B$, the quantum channel $T_{U,\beta}$ leaves invariant the basis states $e_ie_i^*$:
$$\forall \beta \in \mathcal B, \, \forall i \in [n], \qquad T_{U,\beta}(e_ie_i^*) = e_ie_i^*;$$
\item The operator $U$ is block--diagonal, i.e. there exists unitary operators $U_1, \ldots, U_n \in \mathcal U_k$ such that
$$U = \sum_{i=1}^n e_i e_i^* \otimes U_i.$$
\end{enumerate}
The implication (1)$\implies$(2) holds also if (1) is replaced by
\[\exists \beta>0, \forall i\in[n], \qquad T_{U,\beta}(e_ie_i^*)=e_ie_i^*.\]
\end{proposition}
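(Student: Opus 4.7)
The plan is to work with the block decomposition $U = \sum_{i,j} e_ie_j^*\otimes U_{ij}$, $U_{ij}\in M_k(\mathbb C)$, and translate the basis-preservation identity $T_{U,\beta}(e_ie_i^*) = e_ie_i^*$ into a vanishing statement on the off-diagonal blocks $U_{ji}$ with $j\neq i$. The implication $(2)\Rightarrow(1)$ is routine: plugging the block-diagonal form $U = \sum_i e_ie_i^*\otimes U_i$ into the Stinespring formula, orthogonality of the $e_i$'s collapses $U(e_ie_i^*\otimes\beta)U^*$ to $e_ie_i^*\otimes U_i\beta U_i^*$, whose partial trace is $e_ie_i^*$ for \emph{every} state $\beta$; any spanning set of states in $M_k(\mathbb C)$ then witnesses (1).

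For $(1)\Rightarrow(2)$ I would start from the block formula
\[T_{U,\beta}(e_ie_i^*) = \sum_{j,l} e_je_l^*\,\operatorname{Tr}\bigl(U_{ji}\beta U_{li}^*\bigr),\]
and extract from the diagonal $(j,j)$-entry with $j\neq i$ the scalar identity $\operatorname{Tr}(U_{ji}\beta U_{ji}^*) = 0$. Since $\beta\geq 0$, the matrix $U_{ji}\beta U_{ji}^* = (U_{ji}\sqrt\beta)(U_{ji}\sqrt\beta)^*$ is positive semidefinite, so a vanishing trace forces $U_{ji}\sqrt\beta = 0$, i.e.\ $U_{ji}$ annihilates $\operatorname{Range}(\beta)$. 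This is where positivity of $\beta$ does the essential work, and where one would invoke Theorem \ref{thm:Zero_bloc_unitary} if one prefers a modular argument applied to each index $i\in[n]$ separately.

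The remaining and, to my mind, main obstacle is to upgrade this pointwise annihilation to the clean conclusion $U_{ji}=0$. When a single positive definite $\beta$ is available this is immediate because $\operatorname{Range}(\beta) = \mathbb C^k$. In the spanning case, I would argue that the union of the ranges of the $\beta\in\mathcal B$ must be all of $\mathbb C^k$: otherwise every $\beta\in\mathcal B$ would be supported on a common proper subspace $V\subsetneq\mathbb C^k$, placing $\operatorname{span}_{\mathbb C}\mathcal B$ inside $P_V M_k(\mathbb C) P_V\subsetneq M_k(\mathbb C)$, a contradiction. Either hypothesis therefore yields $U = \sum_i e_ie_i^*\otimes U_{ii}$, and $U^*U = I$ decomposes diagonally as $\sum_i e_ie_i^*\otimes U_{ii}^*U_{ii}$, so each $U_{ii}$ is unitary.
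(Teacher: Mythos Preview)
Your argument is correct and is essentially the paper's proof with Theorem~\ref{thm:Zero_bloc_unitary} unwrapped: for each fixed $i$ you are computing $\operatorname{Tr}\bigl((P_{\{e_i\}^\perp}\otimes I_k)\,U\,(e_ie_i^*\otimes\beta)\,U^*\bigr)=0$ and using positivity of $\beta$ to kill the off-diagonal column $U_{ji}$, which is exactly the $(2)\Rightarrow(3)$ step of that theorem specialized to the zero-block algebra $0_{n-1}\oplus M_1$ associated with the index $i$. One cosmetic simplification for the spanning case: since $\beta\mapsto T_{U,\beta}(e_ie_i^*)$ is linear, the identity $T_{U,\beta}(e_ie_i^*)=e_ie_i^*$ extends from $\mathcal B$ to all of $M_k(\mathbb C)$, in particular to $\beta=I_k/k>0$, so you can bypass the range-union argument entirely.
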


The main goal of the current paper is to obtain similar characterization of the set of bipartite unitary operators giving quantum operations which preserve some structure on the input state or observable, such as diagonal matrices, tensor products, or other block-structures. One of our main results is a characterization of quantum channels creating no coherences between states of a prescribed basis. They therefore act essentially classically. More precisely we characterize bipartite unitary matrices generating CPTP maps preserving quantum states which are diagonal in the computation basis (see Theorem \ref{thm:diagonal-Schrodinger} for a precise statement):
\begin{theorem*}
A bipartite unitary operator $U$ acting on $\mathbb C^n \otimes \mathbb C^k$ generates quantum channels $T$ which, independently of the environment state $\beta$, leave invariant the diagonal subalgebra $D_n$ if and only if its blocks $\{U_{ij}\}_{i,j=1}^n$ are partial isometries with initial and final spaces $E_{ij}$ and $F_{ij}$ respectively satisfying the following conditions:
\begin{itemize}
\item For all rows $i$, the collection of final spaces $\{F_{ij}\}_j$ forms an orthogonal partition of $\mathbb C^k$
\item For all columns $j$, both collections of initial spaces $\{E_{ij}\}_i$ and of final spaces $\{F_{ij}\}_i$ form orthogonal partitions of $\mathbb C^k$.
\end{itemize}
\end{theorem*}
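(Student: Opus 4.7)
The plan is to go through the Stinespring formula once, read off the condition that pins down preservation of the diagonal, and then combine it with unitarity to reconstruct the block structure in both directions.

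Writing $U = \sum_{i,j=1}^n e_i e_j^* \otimes U_{ij}$ with each $U_{ij}\in M_k(\mathbb C)$, a direct expansion gives
\[
T_{U,\beta}(e_a e_a^*) \;=\; \sum_{i,i'=1}^n \operatorname{Tr}\!\bigl(U_{i'a}^* U_{ia}\,\beta\bigr)\, e_i e_{i'}^*.
\]
Requiring this matrix to be diagonal for every density matrix $\beta$, and evaluating in particular on rank-one $\beta = \psi\psi^*$ followed by polarization, is equivalent to
\[
(\ast)\qquad U_{i'a}^* U_{ia} = 0 \qquad \text{for all } i\neq i' \text{ and all columns } a,
\]
which geometrically says that for each column $a$ the final spaces $F_{ia} := \operatorname{Im}(U_{ia})$ are pairwise orthogonal in $\mathbb C^k$.

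For the ``only if'' direction, the main device I would introduce is, for each column $a$, the auxiliary operator $X_a := \sum_i U_{ia}$. From $U^*U=I$, i.e.\ $\sum_i U_{ia}^* U_{ia} = I$, together with $(\ast)$ to kill the cross terms, one gets $X_a^* X_a = I$. Hence $X_a$ is a unitary on $\mathbb C^k$ whose image sits inside $\bigoplus_i F_{ia}$, forcing $\{F_{ia}\}_i$ to be an orthogonal partition of $\mathbb C^k$. Projecting onto the summand $F_{ia}$ isolates one block, $U_{ia} = P_{F_{ia}} X_a$, so
\[
U_{ia}^* U_{ia} \;=\; X_a^* P_{F_{ia}} X_a
\]
is a projection; each $U_{ia}$ is therefore a partial isometry with final space $F_{ia}$ and initial space $E_{ia} = X_a^* F_{ia}$, and $\{E_{ia}\}_i$ is an orthogonal partition of $\mathbb C^k$ as the $X_a^*$-image of such a partition. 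The remaining row condition is read off $UU^*=I$: the relation $\sum_j U_{ij}U_{ij}^* = I$ becomes $\sum_j P_{F_{ij}} = I$, and projections summing to the identity are automatically pairwise orthogonal (a one-line argument on $\operatorname{Tr}(P_j P_{j'})$), so $\{F_{ij}\}_j$ is an orthogonal partition of $\mathbb C^k$ for each row $i$.

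For the ``if'' direction, orthogonality of $\{F_{ia}\}_i$ within a column immediately gives $(\ast)$, and then
\[
T_{U,\beta}(e_a e_a^*) \;=\; \sum_i \operatorname{Tr}\!\bigl(P_{E_{ia}}\beta\bigr)\, e_i e_i^*
\]
is diagonal; linearity extends this to all of $D_n$. The one step that is more than bookkeeping is the upgrade from the purely image-level condition $(\ast)$ to the assertion that each block is a partial isometry: the auxiliary unitary $X_a$ is what does the work, since once it is in play the identity $U_{ia}^*U_{ia} = X_a^* P_{F_{ia}} X_a$ produces the projection structure for free, and everything else is a direct reading of unitarity.
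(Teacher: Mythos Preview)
Your proof is correct, and it takes a genuinely different route from the paper's. The paper packages the implication ``$(\ast)$ plus $\sum_i U_{ia}^*U_{ia}=I$ $\Rightarrow$ partial isometries with $\{E_{ia}\}_i$ and $\{F_{ia}\}_i$ partitioning $\mathbb C^k$'' into a separate lemma (Lemma~\ref{lem:partial-isometries}), whose proof goes through a rank-counting argument \`a la Cochran (Lemma~\ref{lem:Cochran}): one shows $\sum_i \dim F_{ia}\le k$ from the mutual orthogonality and $\sum_i \operatorname{rk}(U_{ia})\ge k$ from $\sum_i U_{ia}^*U_{ia}=I$, then squeezes. The row condition \ref{it:row-final} is then pulled from Lemma~\ref{lem:unitary-and-PI}. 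Your device $X_a=\sum_i U_{ia}$ bypasses all of this: once $(\ast)$ kills the cross terms, $X_a^*X_a=I$ is immediate, and the factorization $U_{ia}=P_{F_{ia}}X_a$ hands you the projection identity $U_{ia}^*U_{ia}=X_a^*P_{F_{ia}}X_a$ and the partition $E_{ia}=X_a^*F_{ia}$ for free, with no dimension bookkeeping. This is cleaner and more explicit in this setting; the paper's modular lemmas, on the other hand, are reused verbatim in the block-diagonal case (Section~\ref{sec:block-diagonal}), where no single column-sum operator is available and the rank argument earns its keep.
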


In the result above, the collection of final spaces of the partial isometries $\{F_{ij}\}_{ij}$, is required to partition the total space $\mathbb C^k$ simultaneously on every row and column. In the case where each space $F_{ij}$ is of dimension one, this is precisely the definition of \emph{quantum Latin squares} from \cite{mvi}. We analyze further these objects and we propose an algorithm to randomly sample such objects. We analyze numerically this algorithm (inspired by the classical Sinkhorn algorithm) and prove the convergence of a relaxed version of it.

The paper is organized as follows. In Section \ref{sec:channels-and-subalgebras} we set up the problem we are studying, giving the necessary definitions. In Section \ref{sec:partial-isometries} we gather some useful facts about partial isometries which shall be necessary later on. The next four sections represent the core of the paper and contain the results about, respectively, diagonal, block--diagonal, tensor product, and zero  algebras. The proofs for the preservation of diagonal or block diagonal algebra and tensor product algebra use incompatible techniques, we thus do not provide a characterization of bipartite unitary matrices generating quantum channels preserving direct sum of tensor product sub--algebras. Section \ref{sec:algorithms} contains some numerical recipes to sample from the sets of unitary operators considered in this paper. Finally, the Appendix contains the proof of the non--commutative Sinkhorn algorithm in the case where the blocks of the matrix are invertible.

\bigskip

\noindent \emph{Acknowledgments.} I.N.~would like to thank Teo Banica and Martin Idel for inspiring discussions around Sinkhorn's algorithm and its different generalizations. The authors' research has been supported by the ANR projects {RMTQIT}  {ANR-12-IS01-0001-01} and {StoQ} {ANR-14-CE25-0003-01}. Both authors acknowledge the hospitality of the Mathematical Physics Chair of the Technische Universit\"at M\"unchen, where this research was initiated. 

\section{Quantum channels and invariant subalgebras}
\label{sec:channels-and-subalgebras}

In this paper, we are interested in the structure of bipartite unitary operators $U \in \mathcal U_{nk}$, and of the quantum operations they generate via dilation results. To any such bipartite unitary operator and any trace one positive semidefinite matrix $\beta  \in M_k(\mathbb C)$, we associate two maps:
\begin{align}
\nonumber S_{U,\beta} :M_n(\mathbb C) &\to M_n(\mathbb C)\\
\label{eq:def-UCP} X &\mapsto [\operatorname{id} \otimes \operatorname{Tr}](U^*(X \otimes I_k)U(I_n \otimes \beta))
\end{align}
and
\begin{align}
\nonumber T_{U,\beta} :M_n(\mathbb C) &\to M_n(\mathbb C)\\
\label{eq:def-TPCP} X &\mapsto [\operatorname{id} \otimes \operatorname{Tr}](U(X \otimes \beta)U^*)
\end{align}

If the matrix $\beta$ is a quantum state, i.e. $\beta \in M_k^{1,+}(\mathbb C)$, the map $S_{U,\beta}$ is unital and completely positive (UCP), while the map $T_{U,\beta}$ is trace preserving and completely positive (TPCP). Moreover, a direct calculation shows that the maps above are dual with respect to the Hilbert--Schmidt scalar product:
$$\forall X,Y \in M_n(\mathbb C), \qquad \langle S_{U,\beta}(X), Y \rangle = \langle X, T_{U,\beta}(Y) \rangle.$$
We refer the reader to \cite[Chapter 6]{pau} for a mathematical introduction to CP maps, and to \cite[Chapter 8]{nch} or \cite{wol} for a quantum information theory perspective.

Our ultimate goal will be to characterize bipartite unitary operators $U$ with the property that \emph{all} the UCP maps $S_{U,\beta}$ (respectively all quantum channels $T_{U,\beta}$)  preserve a $C^*$--algebra 
\begin{equation}\label{eq:C-star-subalgebra}
\mathcal A = 0_{d_0} \oplus \bigoplus_{i=1}^N \left( M_{d_i}(\mathbb C) \otimes I_{r_i} \right).
\end{equation}
Note that the above expression is, up to a global unitary rotation, the most general form of a $C^*$--subalgebra of $M_n(\mathbb C)$, see \cite[Theorem I.10.8]{dav}. The diagonal subalgebra corresponds to the case $d_0 = 0$, $d_i = r_i = 1$, for $i \in [N]$ (for an integer $n$, we write $[n]:=\{1, 2, \ldots, n\}$). More precisely, we shall be interested in the sets
\begin{align*}
\mathcal U^{(\mathcal A, H)} &:= \{U \in \mathcal U_{nk} \, : \, \forall \beta \in M_k^{1,+}(\mathbb C), \, S_{U,\beta}(\mathcal A) \subseteq \mathcal A\} \\
\mathcal U^{(\mathcal A, S)} &:= \{U \in \mathcal U_{nk} \, : \, \forall \beta \in M_k^{1,+}(\mathbb C), \, T_{U,\beta}(\mathcal A) \subseteq \mathcal A\} 
\end{align*} 
of bipartite unitary operators leaving the subalgebra $\mathcal A$ invariant in the Heisenberg, respectively the Schr\"odinger picture. We shall also be interested in comparing the sets above corresponding to two subalgebras $\mathcal A_1 \subseteq \mathcal A_2 \subseteq M_n(\mathbb C)$. In general, we have the following inclusion diagram

\renewcommand{\arraystretch}{1.5}
\begin{center}
\begin{tabular}{ccc}
$\{U \, : \, \forall \beta, \, S_{U,\beta}(\mathcal A_1) \subseteq \mathcal A_1\}$ & $\subseteq$  & $\{U \, : \, \forall \beta, \, S_{U,\beta}(\mathcal A_1) \subseteq \mathcal A_2\}$ \\
\rotatebox{90}{$\subseteq$} & & \rotatebox{90}{$\subseteq$}
 \\
$\{U \, : \, \forall \beta, \, S_{U,\beta}(\mathcal A_2) \subseteq \mathcal A_1\}$ & $\subseteq$  & $\{U \, : \, \forall \beta, \, S_{U,\beta}(\mathcal A_2) \subseteq \mathcal A_2\}$ 
\end{tabular}
\end{center}

Similar relations hold for channels in the Schr\"odinger picture. Note that the above diagram does not imply any obvious relations between the sets $\mathcal U^{(\mathcal A_1, H/S)}$ and  $\mathcal U^{(\mathcal A_2, H/S)}$; some special cases of algebra inclusions will be investigated in Sections \ref{sec:comparing-sum} and \ref{sec:comparing-tensor}, to show that these sets are, in general, not comparable.

\section{Structures arising from partial isometries}
\label{sec:partial-isometries}
Recall that a \emph{partial isometry} between two Hilbert spaces is a map such that its restriction to the orthogonal of its kernel is an isometry. More precisely, a matrix $R \in M_n(\mathbb C)$ is a partial isometry if there exists two subspaces $E$ and $F$, called respectively the \emph{initial} and the \emph{final subspace} of $R$, such that
\begin{enumerate}
\item $\ker R = E^\perp$;
\item $R:E \to F$ is an isometry.
\end{enumerate}
Partial isometries are characterized by the relation $R=RR^*R$ (or, equivalently, by the relation $R^*= R^*RR^*$). Let us remark right away that two important classes of operators, orthogonal projections and unitary operators, are partial isometries.

\begin{definition}
A family of vector subspaces $\{E_i\}_{i \in I}$ is said to form a \emph{partition} of $\mathbb C^k$ if
\begin{enumerate}
\item For all $i \neq j \in I$, we have $E_i \perp E_j$;
\item $\oplus_{i \in I} E_i = \mathbb C^k$.
\end{enumerate}
\end{definition}

\begin{definition}\label{def:bistochastic-PI}
An operator $R \in M_{nk}(\mathbb C) \cong M_n(M_k(\mathbb C))$ is called a \emph{matrix of partial isometries} if its blocks $R_{ij} \in M_k(\mathbb C)$ defined by
$$R = \sum_{i,j=1}^n e_ie_j^* \otimes R_{ij}$$
are partial isometries. Let $E_{ij}$ (resp.~$F_{ij}$) be the initial (resp.~final) spaces of the partial isometries $R_{ij}$. The matrix of partial isometries $R$ is said to be of \emph{type $(x,y,z,\ldots)$ }if, in addition, the subspaces $E_{ij},F_{ij}$ satisfy the orthogonality conditions $x,y,z,\ldots$ from the list below:
\begin{enumerate}[label=\textnormal{(\arabic*)}]
\item[\mylabel{it:row-initial}{(C1)}] For all $i \in [n]$, the subspaces $\{E_{ij}\}_{j \in [n]}$ form a partition of $\mathbb C^k$;
\item[\mylabel{it:row-final}{(C2)}] For all $i \in [n]$, the subspaces $\{F_{ij}\}_{j \in [n]}$ form a partition of $\mathbb C^k$;
\item[\mylabel{it:col-initial}{(C3)}] For all $j \in [n]$, the subspaces $\{E_{ij}\}_{i \in [n]}$ form a partition of $\mathbb C^k$;
\item[\mylabel{it:col-final}{(C4)}] For all $j \in [n]$, the subspaces $\{F_{ij}\}_{i \in [n]}$ form a partition of $\mathbb C^k$.
\end{enumerate}
\end{definition}

\begin{lemma}\label{lem:unitary-and-PI}
A matrix of partial isometries is unitary iff it is of type \ref{it:row-final},\ref{it:col-initial}.
\end{lemma}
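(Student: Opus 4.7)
The plan is to expand $RR^*$ and $R^*R$ in block form and read off the four sets of conditions (two ``diagonal in $i$'' and two ``off-diagonal in $i$'') that their equality with $I_{nk}$ gives, then match these against \ref{it:row-final} and \ref{it:col-initial}.

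\emph{Setup.} Writing $R = \sum_{i,j} e_i e_j^* \otimes R_{ij}$, an index computation yields the block expressions
\[
(RR^*)_{i,i'} = \sum_{j=1}^n R_{ij} R_{i'j}^*, \qquad (R^*R)_{j,j'} = \sum_{i=1}^n R_{ij}^* R_{ij'}.
\]
Since each $R_{ij}$ is a partial isometry with initial space $E_{ij}$ and final space $F_{ij}$, one has $R_{ij} R_{ij}^* = P_{F_{ij}}$ and $R_{ij}^* R_{ij} = P_{E_{ij}}$, where $P_E$ denotes the orthogonal projection onto $E$.

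\emph{Forward direction.} Assume $R$ is unitary. Looking at the diagonal blocks $i=i'$ of $RR^* = I_n \otimes I_k$ gives $\sum_j P_{F_{ij}} = I_k$ for every row $i$; a standard fact about sums of projections equalling the identity (the projections must be pairwise orthogonal and their ranges must span) yields condition \ref{it:row-final}. Similarly, the diagonal blocks $j=j'$ of $R^*R = I_n \otimes I_k$ yield $\sum_i P_{E_{ij}} = I_k$ for every column $j$, which is \ref{it:col-initial}.

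\emph{Reverse direction.} Assume \ref{it:row-final} and \ref{it:col-initial} hold. It suffices to show $RR^* = I_{nk}$, since in finite dimensions this already implies $R$ is unitary. The diagonal blocks of $RR^*$ equal $\sum_j P_{F_{ij}} = I_k$ by \ref{it:row-final}. For the off-diagonal blocks, fix $i \neq i'$ and any $j$. Using $R_{ij} = R_{ij} P_{E_{ij}}$ and $R_{i'j}^* = P_{E_{i'j}} R_{i'j}^*$, together with $P_{E_{ij}} P_{E_{i'j}} = 0$ (which follows from the orthogonality part of \ref{it:col-initial}), one obtains $R_{ij} R_{i'j}^* = 0$ termwise, hence the sum vanishes.

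The only subtle point is the elementary lemma that $\sum_\alpha P_{V_\alpha} = I$ for projections forces the $V_\alpha$ to be pairwise orthogonal and to span the whole space; once this is invoked, everything else is a direct block calculation, and there is no real obstacle. I would treat this fact as standard (or cite it inline) rather than prove it from scratch.
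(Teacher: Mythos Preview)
Your proof is correct and follows essentially the same approach as the paper: both compute the block form of $RR^*$ (and, implicitly or explicitly, $R^*R$), use that $R_{ij}R_{ij}^* = P_{F_{ij}}$ and $R_{ij}^*R_{ij} = P_{E_{ij}}$, and invoke the elementary fact that a sum of orthogonal projections equals the identity iff the ranges form an orthogonal partition. Your off-diagonal argument via $R_{ij} = R_{ij}P_{E_{ij}}$ and $R_{i'j}^* = P_{E_{i'j}}R_{i'j}^*$ is exactly the paper's observation that $R_{i'j}^*$ has final space $E_{i'j}$, just written in projection notation.
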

\begin{proof}
Let $R$ be a matrix of partial isometries of type \ref{it:row-final},\ref{it:col-initial}, with blocks $R_{ij}$. 
We have 
$$RR^* = \sum_{i,j=1}^n e_ie_j^* \otimes \sum_{s=1}^n R_{is}R_{js}^*.$$
Consider now a general term $R_{is}R_{js}^*$ in the sum above. The map $R_{js}^*$ is also a partial isometry, having initial space $F_{js}$ and final space $E_{js}$. Since, for all $s$, the spaces $\{E_{rs}\}_{r \in [n]}$ form a partition of $\mathbb C^k$, we obtain $R_{is}R_{js}^* = \delta_{ij} P_{F_{is}}$ ($P_V$ denotes the orthogonal projection onto some subspace $V$). Hence 
$$RR^* = \sum_{i=1}^n e_ie_i^* \otimes \sum_{s=1}^n P_{F_{is}} = \sum_{i=1}^n e_ie_i^* \otimes I_k = I_{nk},$$
where we have used again the direct sum hypothesis for the partition $\{F_{is}\}_{s \in [n]}$.

Reciprocally, consider now a unitary operator $U$ and represent it by blocks (which are partial isometries)
$$U = \sum_{i,j=1}^n e_ie_j^* \otimes U_{ij}.$$
From the unitarity condition $UU^* = I_{nk}$, we get that, for all $i \in [n]$, 
$$I_k = \sum_{j=1}^n U_{ij}U_{ij}^* = \sum_{j=1}^n P_{F_{ij}}.$$
But a sum of selfadjoint projections is equal to the identity iff the subspaces of the projections form a partition of the total space. Hence, we obtain that, for all $i$, the subspaces $\{F_{ij}\}_{j \in [n]}$ form a partition of $\mathbb C^k$, which is condition \ref{it:row-final} from Definition \ref{def:bistochastic-PI}. Starting from the relation $U^*U=I_{nk}$, a similar reasoning produces condition \ref{it:col-initial} from the statement.
\end{proof}

The proofs of the upcoming sections are based on the following lemma, due to Cochran \cite{coc}. 

\begin{lemma}\label{lem:Cochran}
Let $A_1, \ldots, A_n \in M_k(\mathbb C)$ be operators satisfying 
\begin{equation}\sum_{i=1}^n A_i^*A_i = I.\label{eq:sum Id}\end{equation}
Then, 
\begin{equation}\sum_{i=1}^n \mathrm{rk}(A_i) \geq k,\label{eq:rank ineq}\end{equation}
with equality iff the $A_i$ are partial isometries with initial spaces $E_i$ forming a partition of $\mathbb C^k$.
\end{lemma}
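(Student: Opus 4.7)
The plan is to introduce, for each $i$, the initial space $E_i := (\ker A_i)^\perp = \mathrm{Im}(A_i^*A_i)$, which has dimension $\mathrm{rk}(A_i)$, and to extract from the identity $\sum_i A_i^*A_i = I$ two geometric facts: (a) the $E_i$ span $\mathbb C^k$ (which gives the inequality), and (b) when equality holds, the sum $\sum E_i$ is algebraically direct, and then positivity upgrades this to an orthogonal direct sum.

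For the inequality \eqref{eq:rank ineq}, I would apply \eqref{eq:sum Id} to any $v \in \bigcap_i E_i^\perp = \bigcap_i \ker A_i$ to get $v = \sum_i A_i^*A_i v = 0$. Hence $\sum_i E_i = \mathbb C^k$ and
\[\sum_{i=1}^n \mathrm{rk}(A_i) \;=\; \sum_{i=1}^n \dim E_i \;\geq\; \dim\Bigl(\sum_{i=1}^n E_i\Bigr) \;=\; k,\]
with equality iff the sum $\sum_i E_i$ is direct (not a priori orthogonally direct).

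For the equality statement, the ``if'' direction is straightforward: if each $A_i$ is a partial isometry with initial space $E_i$ and the $E_i$ form a partition, then $A_i^*A_i = P_{E_i}$, and $\sum_i P_{E_i} = I$ simply expresses the partition condition, while $\sum_i \dim E_i = k$ is obvious. For the ``only if'' direction, assume $\sum_i \dim E_i = k$, so that $\mathbb C^k = E_1 \oplus \cdots \oplus E_n$ as an algebraic direct sum. For $v \in E_i$, write $v = \sum_j A_j^*A_j v$ and observe that $A_j^*A_j v \in \mathrm{Im}(A_j^*A_j) = E_j$. By uniqueness of the direct-sum decomposition of $v$ (which sits in $E_i$), I conclude $A_i^*A_i v = v$ and $A_j^*A_j v = 0$ for $j \neq i$. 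Combined with $\ker(A_i^*A_i) = E_i^\perp$, this forces $A_i^*A_i = P_{E_i}$, so each $A_i$ is a partial isometry with initial space $E_i$.

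It then remains to upgrade the algebraic direct sum $\bigoplus E_i = \mathbb C^k$ to an orthogonal one. Since $\sum_i P_{E_i} = I$, for any $v \in E_i$ I have $v = P_{E_i}v + \sum_{j \neq i} P_{E_j}v$, so $\sum_{j \neq i} P_{E_j} v = 0$; taking inner product with $v$ gives $\sum_{j \neq i} \|P_{E_j} v\|^2 = 0$, hence $v \perp E_j$ for all $j \neq i$, so the $E_i$ are pairwise orthogonal. The delicate point of the proof is precisely this last deduction: the algebraic direct-sum conclusion by itself is not enough, and one must reinject the self-adjointness of the $A_i^*A_i$ (through the positivity of the $P_{E_j}$) to rule out non-orthogonal decompositions.
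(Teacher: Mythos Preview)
Your proof is correct. The overall structure parallels the paper's argument (define $E_i=(\ker A_i)^\perp$, show $\sum_i E_i=\mathbb C^k$, then in the equality case derive $A_i^*A_i=P_{E_i}$ and finally orthogonality from $\sum_i P_{E_i}=I$), but the route to $A_i^*A_i=P_{E_i}$ is genuinely different. The paper argues via the operator inequality $A_i^*A_i\le P_{E_i}$ (valid because $A_i^*A_i\le I$ and $A_i^*A_i$ vanishes on $E_i^\perp$), sums to get $\sum_i P_{E_i}\ge I$, and then compares traces under the rank-equality assumption to force $\sum_i P_{E_i}=I$; only \emph{after} establishing mutual orthogonality of the $E_i$ does it deduce the partial-isometry relation $A_jA_j^*A_j=A_j$. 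You instead exploit directly the \emph{algebraic} direct sum $\bigoplus_i E_i=\mathbb C^k$: for $v\in E_i$ the identity $v=\sum_j A_j^*A_j v$ with $A_j^*A_j v\in E_j$ gives two direct-sum decompositions of the same vector, and uniqueness forces $A_i^*A_i|_{E_i}=\mathrm{id}$ at once. This yields the partial-isometry conclusion \emph{before} orthogonality, reversing the paper's order. Your argument is a touch more elementary (no operator-order manipulations, no implicit trace comparison), at the cost of leaning on the finite-dimensional fact that dimension equality upgrades a spanning family of subspaces to an internal direct sum; the paper's inequality-and-trace route has a more analytic flavour. Both proofs end with essentially the same positivity trick to pass from $\sum_i P_{E_i}=I$ to pairwise orthogonality.
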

\begin{proof}
The rank sum inequality \eqref{eq:rank ineq} is a consequence of the rank subadditivity and the equality ${\rm rk}(A^*A)={\rm rk}(A)$ for any matrix $A\in\mathcal M_k(\mathbb C)$.

The fact that if the $A_i$'s are partial isometries with initial spaces forming a partition of $\mathbb C^k$ then $\sum_{i=1}^n {\rm rk}(A_i)=k$ and $\sum_{i=1}^n A_i^*A_i=I$ follows directly from $A_i^*A_i=P_{E_i}$ and ${\rm rk}(A_i)=\dim(E_i)$ by the Rank--nullity Theorem.

Let us prove that $\sum_{i=1}^n {\rm rk}(A_i)=k$ and $\sum_{i=1}^n A_i^*A_i=I$ imply that the $A_i$'s are partial isometries with the initial spaces $E_i$ forming a partition of $\mathbb C^k$.

For general matrices the spaces $E_i$ are defined as the orthogonal complement of $\ker(A_i)$ in $\mathbb C^k$.  Let $E$ be the direct sum of the $E_i$'s: $E=\bigoplus_{i=1}^n E_i$. Assume $E\subset \mathbb C^k$ with a strict inclusion. Then $\exists x\in\mathbb C^k$ different from $0$ such that $\forall i\in[n], A_i x=0$. From \eqref{eq:sum Id}, it follows that 
\[\|x\|_2^2=\sum_{i=1}^n \langle A_i x, A_i x\rangle=0.\]
That contradicts the condition $x\neq0$ thus $E=\mathbb C^k$.

From \eqref{eq:sum Id} again, we have that for every $i\in[n]$, $A_i^*A_i\leq I$, hence $A_i^*A_i\leq P_{E_i}$ since $A_i^*A_i$ is $0$ on the orthogonal of $E_i$. It follows that $\sum_{i=1}^n P_{E_i}\geq I$ by summing over $i$. Assume this last inequality is strict. From the rank inequality \eqref{eq:rank ineq}, $\sum_{i=1}^n {\rm rk}(A_i)>k$ since ${\rm rk}(A_i)=\dim(E_i)$ by the Rank--nullity Theorem. Since we assumed that $\sum_{i=1}^n {\rm rk}(A_i)=k$ from the contradiction we obtain $\sum_{i=1}^n P_{E_i}= I$. Multiplying on the left and right by a specific $P_{E_j}$ we obtain $\sum_{i\neq j} P_{E_j} P_{E_i} P_{E_j}=0$. Since this sum is a sum of positive matrices equal to $0$, all the terms must be $0$. Hence, the mutual orthogonality of the $E_i$'s follows. 

We proved that the family $\{E_i\}_{i\in[n]}$ is a partition of $\mathbb C^k$. It remains to prove that the $A_i$'s are partial isometries.

From the mutual orthogonality of the $E_i$'s and the equality \eqref{eq:sum Id}, for any $j\in[n]$,
\[A_j\sum_{i=1}^nA_i^*A_i=A_jA_j^*A_j=A_j.\]
Hence the $A_i$'s are partial isometries.
\end{proof}

We prove next a more specialized version of Cochran's lemma.

\begin{lemma}\label{lem:partial-isometries}
Let $A_1, \ldots, A_n \in M_k(\mathbb C)$ such that 
\begin{align}
\label{eq:sum-eq-I} \sum_{i=1}^n A_i^*A_i &= I\\
\label{eq:eq-0}\forall i \neq j, \qquad A_i^*A_j &= 0.
\end{align}
Then, the matrices $A_i$ are partial isometries with the family of initial vector spaces $\{E_i\}$ and the family of final  vector spaces $\{F_i\}$ forming partitions of $\mathbb C^k$.
\end{lemma}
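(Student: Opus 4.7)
The plan is to deduce the conclusion from Cochran's Lemma (Lemma \ref{lem:Cochran}) by using the off--diagonal condition \eqref{eq:eq-0} to pin down the rank sum.

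First I would translate the algebraic condition $A_i^*A_j=0$ into a geometric statement about the ranges. Since $A_i^*A_j = 0$ means $\mathrm{range}(A_j) \subseteq \ker(A_i^*) = \mathrm{range}(A_i)^\perp$, the ranges $F_i := \mathrm{range}(A_i)$ are pairwise orthogonal subspaces of $\mathbb{C}^k$. In particular, since the $F_i$ sit inside $\mathbb{C}^k$ as mutually orthogonal subspaces,
\[
\sum_{i=1}^n \mathrm{rk}(A_i) = \sum_{i=1}^n \dim F_i \leq k.
\]

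Next I would combine this with the Cochran rank inequality \eqref{eq:rank ineq}, which applies because of hypothesis \eqref{eq:sum-eq-I}, to obtain $\sum_i \mathrm{rk}(A_i) = k$. The equality case of Lemma \ref{lem:Cochran} then yields at once that each $A_i$ is a partial isometry and that the initial spaces $\{E_i\}_{i \in [n]}$ form a partition of $\mathbb{C}^k$.

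It then remains to upgrade the mutual orthogonality of the final spaces $\{F_i\}$ to a partition. This is essentially a dimension count: the $F_i$ are already pairwise orthogonal, and
\[
\sum_{i=1}^n \dim F_i = \sum_{i=1}^n \mathrm{rk}(A_i) = k,
\]
so $\bigoplus_i F_i = \mathbb{C}^k$, giving the second partition. I do not expect any real obstacle here; the only subtlety is to notice that the hypothesis \eqref{eq:eq-0} gives the orthogonality of final (not initial) spaces for free, which is precisely what is needed to force equality in Cochran's rank bound and then to conclude the symmetric partition property on the final side.
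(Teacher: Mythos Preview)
Your proposal is correct and follows essentially the same approach as the paper's own proof: use \eqref{eq:eq-0} to show the final spaces $F_i$ are pairwise orthogonal, deduce $\sum_i \mathrm{rk}(A_i)\leq k$, then invoke the equality case of Cochran's Lemma \ref{lem:Cochran} via \eqref{eq:sum-eq-I} to obtain the partial isometry property and the partition of the $E_i$'s, and finish with the dimension count for the $F_i$'s. The only cosmetic difference is that you phrase the orthogonality of the ranges via $\mathrm{range}(A_j)\subseteq\ker(A_i^*)$, whereas the paper writes out the inner product computation explicitly.
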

\begin{proof}
The final vector subspaces $F_i$ are the image of $\mathbb C^k$ by the respective $A_i$: $F_i=A_i\mathbb C^k$. The rank of $A_i$ is equal by definition to the dimension of $F_i$. From the definition of the vector spaces $F_i$, for any $x\in F_i$ and $y\in F_j$, there exists $x',y'\in \mathbb C^k$ such that $x=A_ix'$ and $y=A_jy'$. From \eqref{eq:eq-0}, if $i\neq j$,
\[\langle x,y\rangle=\langle A_ix',A_jy'\rangle=\langle x',A_i^*A_j y'\rangle=0.\]
It follows that the $F_i$'s are mutually orthogonal.

Since the $F_i$'s are mutually orthogonal, the dimension of the direct sum $F=\bigoplus_{i=1}^n F_i$ is equal to the sum of the dimensions of the $F_i$'s: $\dim(F)=\sum_{i=1}^n\dim(F_i)$. Since $F$ is a vector subspace of $\mathbb C^k$, $\dim(F)\leq k$. The equality and then the conclusion follow from Lemma \ref{lem:Cochran} and \eqref{eq:sum-eq-I}.
\end{proof}

\section{The diagonal algebra}

We start with the most simple case, that of the diagonal subalgebra
$$D_n: = \bigoplus_{a=1}^n M_1(\mathbb C).$$

From a physical perspective, this case is probably the most interesting one, and it is also very illuminating from a mathematical point of view. We shall treat separately the cases of CP maps which are unital (the Heisenberg picture) and trace preserving (quantum channels, the Schr\"odinger picture). Although the results in this section are special cases of the ones in Section \ref{sec:block-diagonal}, we state and prove them separately, in order to showcase the different ideas and techniques involved in the proof, which are more transparent in this case.

\subsection{The Heisenberg picture}

We are interested physical transformations of observables, that is in unital completely positive maps (UCP). More precisely, such maps can be written as
$$S_{U,\beta}(X) = [\mathrm{id} \otimes \mathrm{Tr}](U^*(X \otimes I_k)U(I_n \otimes \beta)),$$
where $U \in \mathcal U_{nk}$ is a unitary operator, $\beta \in M_k^{1,+}(\mathbb C)$ is a density matrix, and $X \in M_n^{sa}(\mathbb C)$ is the observable (Hermitian matrix) on which the map acts. 

\begin{theorem}\label{thm:diagonal-Heisenberg}
Let $\{e_i\}_{i=1}^n$ be a fixed basis of $\mathbb C^n$, $D_n$ the commutative algebra of matrices which are diagonal in the basis $\{e_i\}$, and $U \in \mathcal U_{nk}$ a bipartite unitary operator. The following statements are equivalent:
\begin{enumerate}
\item There exists a set of quantum states $\mathcal B$ which spans $M_k(\mathbb C)$ such that, for any state $\beta \in \mathcal B$, the UCP map $S_{U,\beta}$ leaves invariant diagonal matrices:
$$\forall \beta \in \mathcal B, \qquad S_{U,\beta}(D_n) \subseteq D_n;$$
\item In the basis $\{e_i\}$, the operator $U$ is a matrix of partial isometries (and thus it is of type \ref{it:row-final},\ref{it:col-initial}).
\end{enumerate}
\end{theorem}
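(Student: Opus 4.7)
The natural starting point is to expand $S_{U,\beta}(e_a e_a^*)$ in block form. Writing $U = \sum_{i,j} e_i e_j^* \otimes U_{ij}$, a direct calculation gives
\[ S_{U,\beta}(e_a e_a^*) = \sum_{j,l} e_l e_j^* \cdot \operatorname{Tr}(U_{al}^* U_{aj}\, \beta). \]
Since $D_n$ is spanned by the rank-one projections $e_a e_a^*$, the invariance condition $S_{U,\beta}(D_n) \subseteq D_n$ is equivalent (for a given $\beta$) to the vanishing of every off-diagonal coefficient, namely $\operatorname{Tr}(U_{al}^* U_{aj}\, \beta) = 0$ for all $a$ and all $j \neq l$. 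This is the key reformulation; the rest of the argument is built on it.

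For the implication (2)$\implies$(1), I would read off from the partial-isometry/type \ref{it:row-final} hypothesis that for each row $a$ the final subspaces $F_{aj} = \operatorname{Im}(U_{aj})$ are pairwise orthogonal, hence $U_{al}^* U_{aj} = 0$ whenever $j \neq l$ (as $U_{al}$ maps into $F_{al} \subseteq F_{aj}^\perp = \ker U_{aj}^*$). The trace expression then vanishes for \emph{every} $\beta$, so in particular for $\beta$ ranging over any chosen set $\mathcal B$; so this direction is really unconditional on $\mathcal B$.

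For (1)$\implies$(2), the spanning assumption on $\mathcal B$ upgrades the trace condition to the matrix identity $U_{al}^* U_{aj} = 0$ for all $a$ and all $j \neq l$. The remaining task is to combine this with the unitarity of $U$ to force each block $U_{aj}$ to be a partial isometry; once this is achieved, Lemma \ref{lem:unitary-and-PI} supplies the ``type \ref{it:row-final},\ref{it:col-initial}'' conclusion for free. The cleanest route is to fix a row $a$ and set $Q_j := U_{aj} U_{aj}^*$: from $UU^* = I_{nk}$ one gets $\sum_j Q_j = I_k$, while $Q_j Q_l = U_{aj}(U_{aj}^* U_{al}) U_{al}^* = 0$ for $j \neq l$. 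Then $Q_j = Q_j \sum_l Q_l = Q_j^2$, so each $Q_j$ is an orthogonal projection, which is equivalent to $U_{aj}$ being a partial isometry.

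The only mild subtlety — and the place where I would be careful — is that Cochran-type Lemma \ref{lem:partial-isometries} does not apply verbatim: one is given orthogonality of the products $U_{aj}^* U_{al}$ (orthogonality of \emph{final} spaces), while $UU^* = I$ delivers a sum of the form $\sum_j U_{aj} U_{aj}^*$, which has the wrong $*$-orientation for a direct invocation. The short $Q_j$-argument above sidesteps this; alternatively one could apply Lemma \ref{lem:partial-isometries} to $A_j := U_{aj}^*$ after first noting that $U_{aj}^* U_{al} = 0$ together with $\sum_j U_{aj} U_{aj}^* = I_k$ implies $U_{aj} U_{al}^* = 0$ (which follows once the $Q_j$ are known to be projections). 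Either way, the main obstacle is really bookkeeping rather than a genuine difficulty.
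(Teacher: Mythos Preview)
Your argument is correct and follows the same overall architecture as the paper: expand $S_{U,\beta}(e_ae_a^*)$ in blocks, read off the condition $\operatorname{Tr}(U_{al}^*U_{aj}\beta)=0$, use the spanning hypothesis to get $U_{al}^*U_{aj}=0$, and combine with unitarity to conclude that the blocks are partial isometries.

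The one genuine difference is in how you extract the partial-isometry property. The paper argues via Cochran's Lemma \ref{lem:Cochran}: orthogonality of the final spaces gives $\sum_j \dim F_{aj}\le k$, while $\sum_j U_{aj}U_{aj}^*=I_k$ forces $\sum_j \operatorname{rk}(U_{aj})\ge k$, and equality in Cochran's Lemma (applied to $A_j=U_{aj}^*$) then yields that each $U_{aj}$ is a partial isometry with the $F_{aj}$ partitioning $\mathbb C^k$; condition \ref{it:col-initial} is obtained separately from $\sum_i U_{ij}^*U_{ij}=I_k$. Your $Q_j$-argument is more elementary and self-contained: from $Q_jQ_l=0$ and $\sum_l Q_l=I_k$ you get $Q_j=Q_j^2$ directly, bypassing the rank bookkeeping entirely, and then Lemma \ref{lem:unitary-and-PI} delivers both \ref{it:row-final} and \ref{it:col-initial} in one stroke. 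Your remark about Lemma \ref{lem:partial-isometries} having the wrong $*$-orientation is apt; the paper indeed does not invoke that lemma here but uses the weaker Cochran Lemma \ref{lem:Cochran} instead.
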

\begin{proof}
The easy implication, $(2) \implies (1)$,  is proved by direct calculation. By our assumption, the blocks of $U$ are partial isometries, with initial and final spaces satisfying conditions \ref{it:row-final},\ref{it:col-initial} from Definition \ref{def:bistochastic-PI}
$$U = \sum_{i,j=1}^n e_i e_j^* \otimes U_{ij}.$$

For each basis element $e_s$, we have
\begin{align*}
S_{U,\beta}(e_se_s^*) &= \sum_{i_1,j_1,i_2,j_2 =1}^n [\operatorname{id} \otimes \operatorname{Tr}](e_{j_1}e_{i_1}^* \otimes U_{i_1j_1}^* \cdot e_s e_s^* \otimes I_k \cdot e_{i_2}e_{j_2}^* \otimes U_{i_2 j_2} \cdot I_n \otimes \beta) \\
&= \sum_{j_1,j_2 = 1}^n e_{j_1}e_{j_2}^* \operatorname{Tr}(U_{sj_1}^* U_{sj_2} \beta).
\end{align*}
Since the blocks $U_{ij}$ satisfy condition \ref{it:row-final} from Definition \ref{def:bistochastic-PI}, the operator $U_{sj_1}^* U_{sj_2}$ is null, unless $j_1=j_2$. Hence, the output $S_{U,\beta}(e_se_s^*) $ is a diagonal operator, proving the claim. 

Let us now show $(1) \implies (2)$. Going through the computations above, without assuming anything about the blocks of $U$, we see that the condition that $S_{U,\beta}$ preserves the diagonal implies that for all $s,j_1,j_2 \in [n]$ such that $j_1 \neq j_2$ and for all $\beta \in \mathcal B$, 
$$\operatorname{Tr}(U_{sj_1}^* U_{sj_2} \beta)= 0.$$
Since $\mathcal B$ spans the whole matrix algebra $M_k(\mathbb C)$, we obtain $U_{sj_1}^* U_{sj_2} =0$. In turn,
this implies $F_{sj_1}\perp F_{sj_2}$ for all $s,j_1,j_2 \in[n]$ such that $j_1\neq j_2$. Hence $F_s=\oplus_{j} F_{sj}$ is a direct sum of orthogonal subspaces of $\mathbb C^k$. In particular, $\dim F_s=\sum_j \dim F_{sj}\leq k$.

Since $U$ is unitary, we moreover have, for all $s \in [n]$, $\sum_j U_{sj}U^*_{sj}=I_k$. From Cochran's Lemma \ref{lem:Cochran} (with $A_j = U_{sj}^*$) we deduce $\dim F_s\geq k$. Hence $\sum_j{\rm rk}(U_{sj})=k$ and the $U_{sj}$'s are partial isometries of type \ref{it:row-final}.

Again using the unitarity of $U$, we get $\sum_i U_{ij}^*U_{ij}=I_k$ for all $j\in[n]$. Since the $U_{ij}$'s are partial isometries, the previous equality reads $\sum_{i} P_{E_{ij}}=I_k$ for all $j\in[n]$. It follows that, at fixed $j$, the initial spaces $\{E_{ij}\}_{i \in [n]}$ form a partition of $\mathbb C^k$, which is property \ref{it:col-initial} from Definition \ref{def:bistochastic-PI}. We conclude that the $U_{ij}$'s are partial isometries of type \ref{it:row-final},\ref{it:col-initial}, proving the claim.
\end{proof}
\begin{remark}
Let us consider the simplest case, where $n=k=2$, and the partial isometries appearing as the blocks of $U$ have unit rank. The $4 \times 4$ unitary operators which satisfy the conditions of the result above are of the form
$$U = \begin{bmatrix}
ab^* & a_\perp c^* \\
d b_\perp^* & d_\perp c_\perp^*
\end{bmatrix} \in  M_2(M_2(\mathbb C)),$$
for some vectors $a, a_\perp, b,b_\perp, c,c_\perp, d,d_\perp \in \mathbb C^2$ satisfying the indicated orthogonality relations (we denote by $x_\perp$ a fixed unit vector which is orthogonal to $x \in \mathbb C^2$). The UCP map $S_{U,\beta}$ acts as follows on diagonal operators:
$$S_{U,\beta}\left(\begin{bmatrix} x & 0 \\ 0 & y \end{bmatrix} \right) = \begin{bmatrix} x \langle a, \beta a \rangle + y \langle a_\perp ,\beta a_\perp \rangle & 0 \\ 0 & x \langle d, \beta d \rangle + y \langle d_\perp ,\beta d_\perp \rangle \end{bmatrix} .$$
\end{remark}

\begin{example}\label{ex:id-flip}
The identity and the flip operator
$$F_n = \sum_{i,j=1}^n e_ie_j^* \otimes e_je_i^* \in \mathcal U_{n^2}$$
satisfy the conditions in Theorem \ref{thm:diagonal-Heisenberg}, since
\begin{align*} 
S_{I_{nk}, \beta}(X) &= \operatorname{Tr}(\beta) X \\
S_{F_{n}, \beta}(X) &= \operatorname{Tr}(X\beta) I.
\end{align*}
\end{example}

\subsection{The Schr\"odinger picture}

Let $D_n \subset M_n(\mathbb C)$ the set of diagonal $n \times n$ matrices, in some basis $\{e_i\}_{i=1}^n$ of $\mathbb C^n$, which we assume fixed.

The main result of this section is the following theorem, which characterizes unitary matrices preserving, independently of the ancilla state, the diagonal subalgebra of the system space.

\begin{theorem}\label{thm:diagonal-Schrodinger}
Let $\{e_i\}_{i=1}^n$ be a fixed basis of $\mathbb C^n$, $D_n$ the commutative algebra of matrices which are diagonal in the basis $\{e_i\}$ and $U \in \mathcal U_{nk}$ a bipartite unitary operator. The following statements are equivalent:
\begin{enumerate}
\item There exists a set of quantum states $\mathcal B$ which spans $M_k(\mathbb C)$ such that, for any state $\beta \in \mathcal B$, the quantum channel $T_{U,\beta}$ leaves invariant diagonal matrices:
$$\forall \beta \in \mathcal B, \qquad T_{U,\beta}(D_n) \subseteq D_n;$$
\item In the basis $\{e_i\}$, the operator $U$ is a matrix of partial isometries of type \ref{it:row-final},\ref{it:col-initial},\ref{it:col-final}.
\end{enumerate}
When this is the case, for all quantum states $\beta \in M_k^{1,+}(\mathbb C)$, the quantum channel $T_{U,\beta}$ acts on diagonal matrices as a Markov chain with transition probabilities $(P_{ij})_{i,j \in [n]}$ given by
$$P_{ij} = \operatorname{Tr}(P_{E_{ji}} \beta),$$
where $E_{ji}$ is the initial space of the partial isometry corresponding to the block $(j,i)$ of $U$.
\end{theorem}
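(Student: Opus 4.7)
The plan is to follow the template of the proof of Theorem \ref{thm:diagonal-Heisenberg}, tracking which partition condition gets forced when we switch from the Heisenberg to the Schr\"odinger expression. Writing $U = \sum_{i,j} e_i e_j^* \otimes U_{ij}$ and expanding the definition of $T_{U,\beta}$, a routine block computation will yield
\[
T_{U,\beta}(e_s e_s^*) \;=\; \sum_{i_1, i_2 = 1}^{n} e_{i_1} e_{i_2}^* \, \operatorname{Tr}\!\bigl(U_{i_2 s}^* U_{i_1 s}\, \beta\bigr),
\]
so preservation of $D_n$ will be controlled entirely by the products $U_{i_2 s}^* U_{i_1 s}$ for $i_1 \ne i_2$.

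For the easy direction $(2) \Rightarrow (1)$, I would invoke condition \ref{it:col-final}: if $F_{i_1 s} \perp F_{i_2 s}$ whenever $i_1 \ne i_2$, then the image of $U_{i_1 s}$ lies in $F_{i_2 s}^\perp = \ker U_{i_2 s}^*$, so $U_{i_2 s}^* U_{i_1 s} = 0$ and every off-diagonal term above vanishes. The surviving diagonal entries simplify, via cyclicity of the trace, to $\operatorname{Tr}(U_{is}^* U_{is}\beta) = \operatorname{Tr}(P_{E_{is}}\beta)$, which is exactly the Markov chain formula stated in the theorem; condition \ref{it:col-initial} will then guarantee $\sum_i P_{E_{is}} = I_k$, ensuring that the resulting transition matrix is stochastic. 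For the harder direction $(1) \Rightarrow (2)$, the spanning assumption on $\mathcal B$ and non-degeneracy of the Hilbert--Schmidt pairing promote the scalar identities to operator identities $U_{i_2 s}^* U_{i_1 s} = 0$ for every $s \in [n]$ and every $i_1 \ne i_2$. Combined with the column relation $\sum_{i=1}^n U_{is}^* U_{is} = I_k$ extracted from $U^* U = I_{nk}$, this places the family $\{U_{is}\}_{i \in [n]}$, for each fixed $s$, exactly in the hypotheses of Lemma \ref{lem:partial-isometries}. A single application of that lemma will deliver simultaneously the partial isometry property and the partitions \ref{it:col-initial} and \ref{it:col-final}. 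The last condition \ref{it:row-final} then comes for free: since $U$ is unitary and has just been identified as a matrix of partial isometries, Lemma \ref{lem:unitary-and-PI} forces it.

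The step I expect to require the most care is the bookkeeping after cyclicity of the trace: it is the product $U_{i_2 s}^* U_{i_1 s}$, and not $U_{i_1 s}^* U_{i_2 s}$, that must vanish, and this is what selects orthogonality between \emph{final} spaces within a column and hence yields condition \ref{it:col-final}, rather than the row-type analogue \ref{it:row-final} that appears in the Heisenberg version. This asymmetry is precisely the reason the Schr\"odinger characterization needs one more partition condition than the Heisenberg one; once the index swap is tracked correctly, Lemmas \ref{lem:partial-isometries} and \ref{lem:unitary-and-PI} do the rest of the work with no further subtleties.
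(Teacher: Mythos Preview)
Your proposal is correct and follows essentially the same route as the paper: expand $T_{U,\beta}(e_se_s^*)$ blockwise, use the spanning hypothesis to upgrade the trace conditions to $U_{i_2 s}^* U_{i_1 s}=0$, feed this together with the column relation from $U^*U=I_{nk}$ into Lemma~\ref{lem:partial-isometries} to obtain \ref{it:col-initial} and \ref{it:col-final}, and recover \ref{it:row-final} from Lemma~\ref{lem:unitary-and-PI}. The only cosmetic difference is that for $(2)\Rightarrow(1)$ you invoke \ref{it:col-final} to kill the off-diagonal terms, whereas the paper cites \ref{it:col-initial}; your choice is the natural one, since it is indeed orthogonality of the final spaces $F_{is}$ along a column that forces $U_{i_2 s}^*U_{i_1 s}=0$.
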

\begin{proof}
The proof follows more or less the same steps as the proof of Theorem \ref{thm:diagonal-Heisenberg}. Let us start with the easier implication (2) $\implies$ (1). Consider a basis element $e_s$ and compute
\begin{equation}\label{eq:T-basis-state}
T_{U,\beta}(e_s e_s^*) = \sum_{i,j=1}^n e_i e_j^* \operatorname{Tr}(U_{is} \beta U_{js}^*),
\end{equation}
where $U_{ij} \in M_k(\mathbb C)$ are the blocks of $U$:
$$U=\sum_{i,j=1}^n e_i e_j^* \otimes U_{ij}.$$
Since $U$ is of type \ref{it:col-initial}, we have $U_{js}^*U_{is} = \delta_{ij}P_{E_{is}}$, and thus
$$T_{U,\beta}(e_s e_s^*) = \sum_{i=1}^n e_i e_i^* \operatorname{Tr}(P_{E_{is}} \beta),$$
proving the claim.

\medskip

We prove now the second implication, (1) $\implies$ (2). Using equation \eqref{eq:T-basis-state}, we obtain that $T_{U,\beta}(e_s e_s^*)$ is a diagonal element iff
$$\forall \beta \in \mathcal B, \, \forall i \neq j \in [n], \qquad \operatorname{Tr}(U_{is} \beta U_{js}^*) = 0.$$
Since the above relation holds for a set of matrices $\beta$ which spans the full matrix algebra, we conclude that
\begin{equation}\label{eq:U-js-U-is}
\forall s \in [n], \, \forall i \neq j \in [n], \qquad U_{js}^*U_{is}=0.
\end{equation}
Using the unitarity condition $U \in \mathcal U_{nk}$, we also have
$$\sum_{j=1}^n U_{js}^*U_{js} = I_k.$$ 
We now apply, for every $s$, Lemma \ref{lem:partial-isometries} to conclude that all the blocks $U_{ij}$ of $U$ are partial isometries having the property that, for each column, both the initial and the final spaces of the partial isometries form a partition of $\mathbb C^k$ (these are the conditions \ref{it:col-initial} and \ref{it:col-final} in Definition \ref{def:bistochastic-PI}). Condition \ref{it:row-final} follows from Lemma \ref{lem:unitary-and-PI}. 
\end{proof}

\begin{remark}
The condition \ref{it:row-initial} from Definition \ref{def:bistochastic-PI} is not needed for the result above to hold, as the following example shows. Consider three unit vectors $a,b,c \in \mathbb C^2$ with the property that $\langle b,c\rangle \neq 0$, and let 
$$U = \begin{bmatrix}
ab^* & a_\perp c^* \\
a_\perp b_\perp^* & a c_\perp^*
\end{bmatrix} \in  M_2(M_2(\mathbb C)).$$
By direct calculation, one can show that $U$ is a unitary matrix. Notice that $U$ is a matrix of partial isometries satisfying conditions \ref{it:row-final},\ref{it:col-initial},\ref{it:col-final} from Definition \ref{def:bistochastic-PI}, but failing condition \ref{it:row-initial}. Moreover, for a given qubit density matrix $\beta \in M_2^{1,+}(\mathbb C)$, the channel $T_{U,\beta}$ acts on diagonal operators as follows:
\begin{align*}
T_{U,\beta}(e_1e_1^*) &= \langle b, \beta b \rangle e_1e_1^* + \langle b_\perp, \beta b_\perp \rangle e_2e_2^*\\
T_{U,\beta}(e_2e_2^*) &= \langle c, \beta c \rangle e_1e_1^* + \langle c_\perp, \beta c_\perp \rangle e_2e_2^*,
\end{align*}
giving rise to the following Markov transition matrix
$$P = \begin{bmatrix}
\langle b, \beta b \rangle & \langle b_\perp, \beta b_\perp \rangle \\
\langle c, \beta c \rangle & \langle c_\perp, \beta c_\perp \rangle
\end{bmatrix}.$$
Note that the matrix above has the general form of a $2 \times 2$ stochastic matrix.

If moreover \ref{it:row-initial} holds, the matrix $P$ is bistochastic.
\end{remark}

\section{The block--diagonal algebra}
\label{sec:block-diagonal}

Let 
$$\mathcal A=\bigoplus_{a=1}^N M_{d_a}(\mathbb C)$$
be a subalgebra of $M_n(\mathbb C)$ and 
\begin{equation}\label{eq:decomp-Cn-sum}
\mathbb C^n=\bigoplus_{a=1}^N V_a
\end{equation}
the associated direct sum of orthogonal subspaces decomposition. In the above expression, the integers $d_1, \ldots, d_N$ form a partition of $n$, while the subspaces $V_1, \ldots, V_N$ form a partition of $\mathbb C^n$. Our goal is to characterize UCP maps as in \eqref{eq:def-UCP} (resp.~TPCP as in \eqref{eq:def-TPCP}) which leave the algebra $\mathcal A$ invariant, irrespective of the ancilla state $\beta$. 
Let us consider a basis $\{e_1, \ldots e_n\}$ of $\mathbb C^n$ which is compatible with the decomposition \eqref{eq:decomp-Cn-sum}, that is there exists a partition $[n] = \sqcup_{a=1}^N {\bf a}$ such that $V_a = \operatorname{linspan} \{e_i\}_{i \in {\bf a}}$ for all $a \in [N]$. We define an equivalence relation $i\sim j\Leftrightarrow i,j\in {\bf a}$ for a unique $a\in[N]$.

We shall consider two different block decompositions of the unitary interaction matrix $U \in \mathcal U_{nk}$: the usual one
$$U = \sum_{i,j=1}^n e_ie_j^* \otimes \tilde U_{ij},$$
with $\tilde U_{ij} \in M_k(\mathbb C)$, and the decomposition corresponding to the coarser partition \eqref{eq:decomp-Cn-sum}
$$U=\begin{bmatrix}
U_{11}&U_{12}&\cdots&U_{1N}\\U_{21}&U_{22}&\ldots&U_{2N}\\\vdots&{}&\ddots&\vdots\\U_{N1}&\cdots&\cdots&U_{NN}
\end{bmatrix},$$
where
$$ U_{ab} = \sum_{\substack{i \in {\bf a} \\ j \in {\bf b}}} e_ie_j^* \otimes \tilde U_{ij} \in M_{d_ak \times d_bk}(\mathbb C).$$
Accordingly, let $\{E_{ab}\}_{a,b\in[N]}$ (resp. $\{F_{ab}\}_{a,b\in[N]}$) be the initial (resp. final) spaces of the matrices $ U_{ab}$: $E_{ab} = (\ker  U_{ab})^\perp$ and $F_{ab} = \operatorname{ran}  U_{ab}$.

\subsection{The Heisenberg picture}

As before, we start with the Heisenberg picture of UCP maps, since the statement and the proof of the main result are easier. Note that the Theorem below generalizes Theorem \ref{thm:diagonal-Heisenberg}, which corresponds to the particular case $d_a=1$ for all $a \in [N]$.

\begin{theorem}\label{thm:block-diagonal-Heisenberg}
Let $U\in\mathcal U_{nk}$ and $\mathcal A$ the above defined subalgebra of $M_n(\mathbb C)$. The following two statements are equivalent:
\begin{enumerate}
\item There exists a set of quantum states $\mathcal B$ which spans $M_k(\mathbb C)$ such that for any $\beta\in\mathcal B$, the subalgebra $\mathcal A$ is stable by the UCP map $S_{\beta,U}$:
$$\forall \beta\in\mathcal B,\qquad S_{U,\beta}(\mathcal A)\subset\mathcal A.$$
\item The matrices $\{U_{ab}\}_{a,b\in[N]}$ are partial isometries from $V_b\otimes \mathbb C^k$ to $V_a\otimes \mathbb C^k$ such that
\begin{enumerate}
\item For any $a,b\in[N]$, $F_{ab}=V_a\otimes \hat F_{ab}$ and for each $a\in [N]$, the family $\{\hat F_{ab}\}_{b\in[N]}$ is a partition of $\mathbb C^k$;
\item For each $b\in [N]$, the family $\{E_{ab}\}_{a\in[N]}$ is a partition of $V_b\otimes\mathbb C^k.$
\end{enumerate}
\end{enumerate}
\end{theorem}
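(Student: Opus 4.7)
The plan is to mirror the proof of Theorem \ref{thm:diagonal-Heisenberg}, which corresponds to the special case $d_a = 1$; the new features here are the coarser $N$-block structure and, crucially, the tensor-product form required for $F_{ab}$. The direction $(2) \Rightarrow (1)$ will be a routine expansion: writing
\[
S_{U,\beta}(e_s e_t^*) = \sum_{j_1, j_2} e_{j_1} e_{j_2}^* \operatorname{Tr}\bigl(\tilde U_{sj_1}^* \tilde U_{tj_2} \beta\bigr)
\]
for $s \sim t$, the orthogonality of $\hat F_{ab}$ and $\hat F_{ab'}$ for $b \neq b'$ kills every term with $j_1 \not\sim j_2$, and the output lies in $\mathcal A$.

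For $(1) \Rightarrow (2)$, the first step is to use the spanning property of $\mathcal B$ to translate invariance on basis elements $e_s e_t^*$ ($s \sim t$) into the pointwise orthogonality
\[
\tilde U_{sj_1}^* \tilde U_{tj_2} = 0 \qquad \text{for all } s \sim t \text{ and } j_1 \not\sim j_2.
\]
Summing the $s = t$ instances yields $U_{ab_1}^* U_{ab_2} = 0$ whenever $b_1 \neq b_2$. The structural observation is that $W_a := (P_{V_a} \otimes I_k) U$ satisfies $W_a W_a^* = P_{V_a} \otimes I_k$, so $W_a$ is a coisometry onto $V_a \otimes \mathbb C^k$ and $W_a^* W_a$ is a projection on $\mathbb C^n \otimes \mathbb C^k$; by the orthogonality just derived, this projection is block-diagonal in the $b$-decomposition with diagonal blocks $U_{ab}^* U_{ab}$. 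Since a diagonal block of a projection is itself a projection, each $U_{ab}$ is a partial isometry with $U_{ab}^* U_{ab} = P_{E_{ab}}$. Condition (b) of the statement then follows from the column unitarity $\sum_a P_{E_{ab}} = I_{V_b \otimes \mathbb C^k}$ together with the elementary fact that projections summing to the identity are pairwise orthogonal.

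The subtle part is producing the tensor structure $F_{ab} = V_a \otimes \hat F_{ab}$. For each $s \in \mathbf a$ I would introduce the provisional subspace
\[
\hat F^{(s)}_{ab} := \sum_{j \in \mathbf b} \operatorname{ran}(\tilde U_{sj}) \subseteq \mathbb C^k.
\]
The $s = t$ case of the pointwise orthogonality, combined with the row unitarity $\sum_j \tilde U_{sj} \tilde U_{sj}^* = I_k$, shows via an orthogonal-range decomposition (in the spirit of Lemma \ref{lem:Cochran}) that $\{\hat F^{(s)}_{ab}\}_b$ partitions $\mathbb C^k$ and that $\sum_{j \in \mathbf b} \tilde U_{sj} \tilde U_{sj}^* = P_{\hat F^{(s)}_{ab}}$. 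The $s \neq t$ case (with $s, t \in \mathbf a$) supplies the cross-orthogonality $\hat F^{(s)}_{ab} \perp \hat F^{(t)}_{ab'}$ for $b \neq b'$, and a short matching-partitions lemma---two partitions of $\mathbb C^k$ with such cross-orthogonality must coincide---collapses the family $\{\hat F^{(s)}_{ab}\}_{s \in \mathbf a}$ to a common subspace $\hat F_{ab}$. The off-diagonal row identities $\sum_j \tilde U_{sj} \tilde U_{tj}^* = 0$ ($s \neq t$ in $\mathbf a$) localize to $\sum_{j \in \mathbf b} \tilde U_{sj} \tilde U_{tj}^* = 0$ by the same range-orthogonality argument, and assembling the pieces gives $U_{ab} U_{ab}^* = I_{V_a} \otimes P_{\hat F_{ab}}$, i.e.\ $F_{ab} = V_a \otimes \hat F_{ab}$.

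The main obstacle I anticipate is precisely this last collapse step: promoting the per-row partitions $\{\hat F^{(s)}_{ab}\}_b$ to a single $s$-independent one. This is the genuinely new ingredient relative to the $d_a = 1$ diagonal case, and it is the only place where the $s \neq t$ half of the pointwise orthogonality plays a non-redundant role; without it, one would only obtain the strictly weaker decomposition $F_{ab} = \bigoplus_{s \in \mathbf a} \mathbb C e_s \otimes \hat F^{(s)}_{ab}$, which a priori is not a tensor product.
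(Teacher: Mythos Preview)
Your proposal is correct and complete, but it takes a genuinely different route from the paper's own proof, especially for the tensor-product structure of $F_{ab}$.

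The paper first repackages the small-block condition $\tilde U_{ix}^*\tilde U_{jy}=0$ (for $i\sim j$, $x\not\sim y$) into the block-level statement $U_{ab}^*(X\otimes I_k)U_{ac}=0$ for \emph{all} $X\in M_{d_a}(\mathbb C)$, and then exploits this freedom in $X$ directly: taking $X=I_{d_a}$ gives $F_{ab}\perp F_{ac}$, and Cochran's Lemma (rank counting against the row identity $\sum_b U_{ab}U_{ab}^*=I$) yields the partial-isometry conclusion and the partition of $V_a\otimes\mathbb C^k$ by $\{F_{ab}\}_b$. For the tensor form, the paper works top-down: given $\phi\in F_{ab}$, $\psi\in F_{ac}$ with decompositions $\phi=\sum_i e_i\otimes\phi_i$, $\psi=\sum_j e_j\otimes\psi_j$, a clever phase choice $X_{ij}=\overline{\langle\phi_i,\psi_j\rangle}/|\langle\phi_i,\psi_j\rangle|$ forces $\sum_{i,j}|\langle\phi_i,\psi_j\rangle|=0$, whence $\phi_i\perp\psi_j$ for all $i,j$; defining $\hat F_{ab}$ as the span of all such components and using that $\{F_{ab}\}_b$ already partitions $V_a\otimes\mathbb C^k$ gives $F_{ab}=V_a\otimes\hat F_{ab}$. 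Your argument instead stays at the level of the small blocks: you avoid Cochran by observing that $W_a^*W_a$ is a block-diagonal projection (so its diagonal blocks $U_{ab}^*U_{ab}$ are projections), and you build the tensor structure bottom-up via the per-row subspaces $\hat F^{(s)}_{ab}$, using the $s\neq t$ orthogonality plus your matching-partitions lemma to make them $s$-independent. What your route buys is that it never needs the ``phase trick'' or the explicit $X$-freedom; what the paper's route buys is that once the block-level condition with arbitrary $X$ is isolated, the componentwise orthogonality drops out in one line and the definition of $\hat F_{ab}$ is immediate rather than obtained by collapsing a family.
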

\begin{proof}
Let us show $(1) \implies (2)$. We initially follow the same path as for Theorem \ref{thm:diagonal-Heisenberg}. By direct computation, it is easy to see that the hypothesis from (1) is equivalent to
\begin{equation}\forall i\sim j\text{ and }x\not\sim y, \qquad \tilde U_{ix}^* \tilde U_{jy}=0.\label{eq:detailed_cond}\end{equation}
It remains to prove the equivalence of (2) with this statement.

We can reformulate \eqref{eq:detailed_cond} in terms of matrices:
\begin{equation}\label{eq:detailed-cond-ab}
\forall a,b,c\in[N], b\neq c, \forall X\in M_{d_a}(\mathbb C),\quad U_{ab}^*(X\otimes I_k) U_{ac}=0.
\end{equation}
The proof of this equivalence amounts to taking specific matrices $X=e_i e_j^*$ and multiplying on the left and right by $e_xe_x^*\otimes I_k$ and $e_ye_y^*\otimes I_k$ respectively. We leave the details to the reader.

Let us prove that \eqref{eq:detailed-cond-ab} implies
\begin{enumerate}
\item For all $a\in[N]$, the family $\{F_{ab}\}_{b\in[N]}$ is a partition of $V_a\otimes\mathbb C^k$;
\item For all $b\in[N]$, the family $\{E_{ab}\}_{a\in[N]}$ is a partition of $V_b\otimes\mathbb C^k$.
\end{enumerate}
On the one hand, taking $X=I_{d_a}$ in \eqref{eq:detailed-cond-ab}, we deduce that for each $a\in[N]$ and $b\neq c$, $F_{ab}\perp F_{ac}$. It follows that $F_a=\bigoplus_b F_{ab}$ is a subspace of $V_a\otimes \mathbb C^k$ and is a direct sum of orthogonal subspaces thus $\dim F_a=\sum_{b\in[N]}\dim F_{ab}\leq d_ak$.

On the other hand, since $U$ is unitary, $\sum_{b}U_{ab}U_{ab}^*=I_{d_ak}$. Hence from Cochran's Lemma \ref{lem:Cochran}, $\sum_{b} \dim F_{ab}\geq d_ak$. Thus $\sum_{b} \dim F_{ab}= d_ak$ and the matrices $U_{ab}$ are partial isometries with $\{F_{ab}\}_b$ a partition of $V_a\otimes C^k$. Again, since $U$ is unitary $\sum_{a}U_{ab}^*U_{ab}=\sum_a P_{E_{ab}}=I_{d_bk}$. Hence $\{E_{ab}\}_{a\in[N]}$ is a partition of $V_b\otimes \mathbb C^k$.

We now show that \eqref{eq:detailed-cond-ab} implies a finer structure for the final spaces.

Let $a,b,c\in[N]$ such that $b\neq c$. Let $\phi\in F_{ab}$ and $\psi\in F_{ac}$. Assuming \eqref{eq:detailed-cond-ab}, we have that for all $X\in M_{d_a}(\mathbb C)$,  $\langle\phi,(X\otimes I_k)\psi\rangle=0$. Let the families $\{\phi_i\}_{i\in {\bf a}}\subset \mathbb C^k$ and $\{\psi_i\}_{i \in {\bf a}} \subset \mathbb C^k$ be such that $\phi=\sum_{i\in {\bf a}} e_i\otimes \phi_i$ and $\psi=\sum_{i\in {\bf a}} e_i\otimes \psi_i$. Set $X\in M_{d_a}(\mathbb C)$ such that  in the basis $\{e_i\}_{i\in {\bf a}}$,
$$X_{ij}=\left\{\begin{array}{ll} \frac{|\langle \phi_i,\psi_j\rangle|}{\langle \phi_i,\psi_j\rangle}&\text{ if }\langle \phi_i,\psi_j\rangle\neq 0\\ 1 & \text{ if }\langle \phi_i,\psi_j\rangle= 0\end{array}\right.$$
Then $\langle\phi,(X\otimes I_k)\psi\rangle=0$ implies
$$\sum_{i,j\in{\bf a}}|\langle \phi_i,\psi_j\rangle|=0.$$
Hence for all $i,j\in{\bf a}, \phi_i\perp\psi_j$.

Let $\hat F_{ab}=\operatorname{linspan}\{\phi_i|\phi\in F_{ab}, \phi=\sum_{i\in {\bf a}}e_i\otimes \phi_i\}$ and $\hat F_{ac}=\operatorname{linspan}\{\psi_i|\psi\in F_{ac}, \psi=\sum_{i\in {\bf a}}e_i\otimes \psi_i\}$. We have $\hat F_{ab}\perp \hat F_{ac}$ and $F_{ab}\subset V_a\otimes \hat F_{ab}$ for all $b\in[N]$. We shall prove the converse inclusion.

Let $x\in V_a$ and $\phi\in \hat F_{ab}$. Then $x\otimes \phi\in V_a\otimes \hat F_{ab}$ hence $\forall\psi\in F_{ac}, \langle x\otimes \phi,\psi\rangle=0$ for any $c\neq b$. Since $\{F_{ab}\}_{b\in[N]}$ is a partition of $V_a\otimes \mathbb C^k$, $x\otimes\phi\in F_{ab}$. Hence $V_a\otimes \hat F_{ab}\subset F_{ab}$ and $F_{ab}=V_a\otimes \hat F_{ab}$. Since $\{F_{ab}\}_{b\in[N]}$ is a partition of $V_a\otimes \mathbb C^k$, $\{\hat F_{ab}\}_{b\in[N]}$ is a partition of $\mathbb C^k$.

For the converse implication $(2) \implies (1)$, assume (a) and let us prove \eqref{eq:detailed-cond-ab}. Let $(\phi,\psi)\in V_c\otimes \mathbb C^k \times V_b\otimes \mathbb C^k$, for some blocks $b \neq c$. Then $U_{ab}\psi\in V_a\otimes \hat F_{ab}$ and $U_{ac}\phi\in V_a\otimes \hat F_{ac}$. Then for any $X\in M_{d_a}(\mathbb C)$, 
\begin{equation*}
\begin{split}
\langle \psi, U_{ab}^*(X\otimes I_k)U_{ac}\phi\rangle&=\langle U_{ab}\psi, (I_{d_a}\otimes P_{\hat F_{ab}})(X\otimes I_k)(I_{d_a}\otimes P_{\hat F_{ac}})U_{ac}\phi\rangle\\
	&=\langle U_{ab}\psi,(X\otimes P_{\hat F_{ab}}P_{\hat F_{ac}})U_{ac}\phi\rangle\\
	&=0.
\end{split}
\end{equation*}
We deduce \eqref{eq:detailed-cond-ab} and thus (1).
\end{proof}

At the level of examples, the identity and the flip operators from Example \ref{ex:id-flip} satisfy also the conditions in Theorem \ref{thm:block-diagonal-Heisenberg}.

\subsection{The Schr\"odinger picture}

We move now to the Schr\"odinger case. Since the proof follows closely what has been done in the previous sections, we only sketch the main ideas, leaving the details to the reader. 

\begin{theorem}\label{thm:block-diagonal-Schrodinger}
Let $U\in\mathcal U_{nk}$ and $\mathcal A$ the above defined subalgebra of $M_n(\mathbb C)$. The following two statements are equivalent:
\begin{enumerate}
\item There exists a set of quantum states $\mathcal B$ which spans $M_k(\mathbb C)$ such that for any $\beta\in\mathcal B$, the subalgebra $\mathcal A$ is stable by the quantum channel $T_{\beta,U}$:
$$\forall \beta\in\mathcal B,\qquad T_{U,\beta}(\mathcal A)\subset\mathcal A.$$
\item The matrices $\{U_{ab}\}_{a,b\in [N]}$ are partial isometries from $V_b\otimes \mathbb C^k$ to $V_a\otimes \mathbb C^k$ such that,
\begin{enumerate}
\item For any $a,b\in[N]$, $F_{ab}=V_a\otimes \hat F_{ab}$ and for each $a\in[N]$, the family $\{\hat F_{ab}\}_{b\in[N]}$ is a partition of $\mathbb C^k$;
\item For each $b\in[N]$, the family $\{F_{ab}\}_{a\in[N]}$ is a partition of $V_b\otimes \mathbb C^k$;
\item For each $b\in[N]$, the family $\{E_{ab}\}_{a\in[N]}$ is a partition of $V_b\otimes \mathbb C^k$.
\end{enumerate}
\end{enumerate}
\end{theorem}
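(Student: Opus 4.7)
The plan is to parallel the proofs of Theorems \ref{thm:diagonal-Schrodinger} and \ref{thm:block-diagonal-Heisenberg}. The direction (2) $\Rightarrow$ (1) is a direct block-level computation, while (1) $\Rightarrow$ (2) extracts an algebraic vanishing condition on the blocks of $U$ and then combines it with the two unitarity relations $U^*U=I$ and $UU^*=I$ through Cochran's Lemma \ref{lem:Cochran}.

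For (2) $\Rightarrow$ (1), I would write $X=\bigoplus_b X_b\in\mathcal A$ and compute the $(a,c)$-block of $T_{U,\beta}(X)$ as $\sum_b[\operatorname{id}\otimes\operatorname{Tr}](U_{ab}(X_b\otimes\beta)U_{cb}^*)$. Condition (a) lets one bracket this operator between $(I_{V_a}\otimes P_{\hat F_{ab}})$ on the left and $(I_{V_c}\otimes P_{\hat F_{cb}})$ on the right, and condition (b) --- which, through the factorization from (a), amounts to $\hat F_{ab}\perp\hat F_{cb}$ in $\mathbb C^k$ for $a\neq c$ --- forces $P_{\hat F_{ab}}P_{\hat F_{cb}}=0$. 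A direct expansion of the partial trace on any product decomposition of the sandwiched operator then makes all off-diagonal blocks vanish.

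For (1) $\Rightarrow$ (2), I would first evaluate $T_{U,\beta}$ on the matrix units $e_se_t^*$ with $s\sim t$ that span $\mathcal A$. From $T_{U,\beta}(e_se_t^*)=\sum_{i,j}e_ie_j^*\operatorname{Tr}(\tilde U_{jt}^*\tilde U_{is}\beta)$ together with the spanning of $\mathcal B$, the invariance is equivalent to $\tilde U_{jt}^*\tilde U_{is}=0$ for $s\sim t$ and $i\not\sim j$, i.e.\ at the block level $U_{cb}^*(Y\otimes I_k)U_{ab}=0$ for all $b$, all $a\neq c$, and all $Y\colon V_a\to V_c$. The next step is to mimic the phase-choice argument of Theorem \ref{thm:block-diagonal-Heisenberg}, played column-wise: for $\phi\in F_{ab}$ and $\psi\in F_{cb}$ with $a\neq c$, decomposed as $\phi=\sum_{i\in\mathbf{a}}e_i\otimes\phi_i$ and $\psi=\sum_{j\in\mathbf{c}}e_j\otimes\psi_j$, testing the vanishing condition against a $Y$ whose entries match the phases of $\langle\phi_i,\psi_j\rangle$ forces $\langle\phi_i,\psi_j\rangle=0$ for every pair $(i,j)$, which is exactly the slicewise orthogonality $\hat F_{ab}\perp\hat F_{cb}$ in $\mathbb C^k$. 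Combining this column-wise orthogonality with the two unitarity identities $\sum_aU_{ab}^*U_{ab}=I_{V_b\otimes\mathbb C^k}$ (from $U^*U=I$) and $\sum_bU_{ab}U_{ab}^*=I_{V_a\otimes\mathbb C^k}$ (from $UU^*=I$), and applying Cochran's Lemma \ref{lem:Cochran} to each, saturates the relevant dimension inequalities to equalities and produces simultaneously the partial-isometry structure of each $U_{ab}$, the partition of $V_b\otimes\mathbb C^k$ by $\{E_{ab}\}_a$ and by $\{F_{ab}\}_a$ (conditions (b) and (c)), and, together with the slicewise orthogonality, the tensor factorization $F_{ab}=V_a\otimes\hat F_{ab}$ with $\{\hat F_{ab}\}_b$ partitioning $\mathbb C^k$ (condition (a)).

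The delicate step is this last one. The Heisenberg pairing $U_{ab}^*(X\otimes I_k)U_{ac}=0$ specializes at $X=I_{d_a}$ to the clean orthogonality $U_{ab}^*U_{ac}=0$, immediately yielding $F_{ab}\perp F_{ac}$ and unlocking Cochran's Lemma; the Schr\"odinger pairing $U_{cb}^*(Y\otimes I_k)U_{ab}$ with $Y\colon V_a\to V_c$ admits no such specialization because there is no canonical identity between the orthogonal row factors $V_a$ and $V_c$, so the partial-isometry structure and the tensor factorization must be produced by a more intricate interplay between the two unitarity identities and the column-wise orthogonality of the $\hat F_{ab}$'s.
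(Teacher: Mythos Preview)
Your plan is structurally identical to the paper's proof: derive the block-level vanishing
\[
\forall a\in[N],\ \forall b\neq c,\ \forall X\in M_{d_b\times d_c}(\mathbb C),\qquad U_{ba}^*(X\otimes I_k)U_{ca}=0,
\]
then combine it with the two unitarity relations and Cochran's Lemma, and finally run the phase-choice argument to extract the tensor factorisation $F_{ba}=V_b\otimes\hat F_{ba}$. The direction $(2)\Rightarrow(1)$ you sketch is exactly what the paper does.

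The one substantive difference is the \emph{order} in which you propose to run the two tools in $(1)\Rightarrow(2)$, and this is precisely where your ``delicate step'' sits. You first run the phase argument to obtain the slicewise orthogonality $\hat F_{ab}\perp\hat F_{cb}$ in $\mathbb C^k$, and only then try to invoke Cochran. The difficulty you flag is real: from $\hat F_{ab}\perp\hat F_{cb}$ one gets $\dim F_{ab}\le d_a\dim\hat F_{ab}$ and $\sum_a\dim\hat F_{ab}\le k$, hence $\sum_a\operatorname{rk}(U_{ab})\le\sum_a d_a\dim\hat F_{ab}$, which is \emph{not} in general bounded by $d_bk$ --- so the equality case of Cochran for the column identity $\sum_a U_{ab}^*U_{ab}=I_{d_bk}$ is not immediately available. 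Your proposal leaves this gap open.

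The paper reverses the order. Before touching the phase argument it specialises $X$ to isometries $X=U\!\left(\begin{smallmatrix}I_{d_c}\\0\end{smallmatrix}\right)$ (assuming $d_b\ge d_c$), so that $X\otimes I_k$ embeds $V_c\otimes\mathbb C^k$ isometrically into $V_b\otimes\mathbb C^k$; the vanishing then reads $(X\otimes I_k)F_{ca}\perp F_{ba}$ inside a single space, and varying the unitary $U$ produces enough orthogonality to force the rank inequality needed for Cochran's equality case. With the partial-isometry structure and the partitions (b), (c) in hand, the phase argument is then applied exactly as in Theorem~\ref{thm:block-diagonal-Heisenberg} to upgrade to the tensor factorisation in (a). This isometry specialisation is the concrete device replacing the unavailable $X=I$ trick, and it is the missing ingredient in your outline.
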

\begin{proof}
From direct computation, (1) is equivalent to
\begin{equation}\label{eq:bloc-diagonal-H-condition-small-blocs}
\forall i\sim j\text{ and } x\nsim y,\quad\tilde{U}_{yj}^\ast \tilde{U}_{xi}=0.
\end{equation}
The implication $(2)\implies (1)$ proof is then identical to the one of Theorem \ref{thm:block-diagonal-Heisenberg}. We thus only show the implication, $(1) \implies (2)$. 

In terms of matrices, \eqref{eq:bloc-diagonal-H-condition-small-blocs} is equivalent to,
\begin{equation}\label{eq:bloc-diagonal-condition-schrodinger-X}
\forall a,b,c\in[N], b\neq c, \forall X\in M_{d_b\times d_c}, \quad U_{ba}^\ast (X\otimes I_k)U_{ca}=0.
\end{equation}
Let us prove that,
\begin{enumerate}
\item \label{it:blocs-schrodinger-final-spaces-on-second-index} For each $a\in[N]$, the family $\{F_{ab}\}_{b\in[N]}$ is a partition of $V_a\otimes \mathbb C^k$;
\item \label{it:blocs-schrodinger-final-spaces-on-first-index} For each $b\in[N]$, the family $\{F_{ab}\}_{a\in[N]}$ is a partition of $V_b\otimes \mathbb C^k$;
\item \label{it:blocs-schrodinger-initial-spaces-on-first-index} For each $b\in[N]$, the family $\{E_{ab}\}_{a\in[N]}$ is a partition of $V_b\otimes \mathbb C^k$.
\end{enumerate}
Fix $a,b,c \in [N]$, $b\neq c$ such that $d_b\geq d_c$. Let,
$$X=U\left(\begin{array}{c}I_{d_c}\\0\end{array}\right)$$
with $U$ a unitary $d_b\times d_b$ matrix. Spanning over all possible $U$, equation \eqref{eq:bloc-diagonal-condition-schrodinger-X} implies $F_{ba}\perp F_{ca}$. It follows that for every $a\in [N]$, $\sum_{b\in [N]}\operatorname{rk}(U_{ba})\leq d_a\,k$. Since $\sum_{b\in[N]} U_{ba}^*U_{ba}=I_{d_a}\otimes I_k$, from Cochran Lemma \ref{lem:Cochran}, $\sum_{b\in[N]} \operatorname{rk}(U_{ba})\geq k$ and for any $a,b\in[N]$, $U_{ab}$ is a partial isometry such that  for any $a\in[N]$ the family $\{F_{ba}\}_{b\in[N]}$ is a partition of $V_a\otimes \mathbb C^k$ and (\ref{it:blocs-schrodinger-final-spaces-on-second-index}) holds.

Since for any $b\in[N]$, $\sum_{a\in[N]} U_{ba}U_{ba}^*=I_{d_b}\otimes I_k$ and  for any $a\in[N]$, $\sum_{b\in[N]} U_{ba}^*U_{ba}=I_{d_a}\otimes I_k$, the fact that the matrices $U_{ab}$ are partial isometries implies (\ref{it:blocs-schrodinger-final-spaces-on-first-index}) and (\ref{it:blocs-schrodinger-initial-spaces-on-first-index}) hold.

We now show that \eqref{eq:bloc-diagonal-condition-schrodinger-X} implies a finer structure for the final spaces.

Let $a,b,c\in[N]$ such that $b\neq c$. Let $\phi\in F_{ba}$ and $\psi\in F_{ca}$. Assuming \eqref{eq:bloc-diagonal-condition-schrodinger-X}, we have that for all $X\in M_{d_b\times d_c}(\mathbb C)$,  $\langle\phi,(X\otimes I_k)\psi\rangle=0$. Let the families $\{\phi_i\}_{i\in {\bf b}}\subset \mathbb C^k$ and $\{\psi_i\}_{i \in {\bf c}} \subset \mathbb C^k$ be such that $\phi=\sum_{i\in {\bf b}} e_i\otimes \phi_i$ and $\psi=\sum_{i\in {\bf c}} e_i\otimes \psi_i$. Set $X\in M_{d_b\times d_c}(\mathbb C)$ such that  in the orthonormal bases $\{e_i\}_{i\in {\bf b}}$ and $\{e_j\}_{j\in {\bf c}}$ of respectively $V_b$ and $V_c$,
$$X_{ij}=\left\{\begin{array}{ll} \frac{|\langle \phi_i,\psi_j\rangle|}{\langle \phi_i,\psi_j\rangle}&\text{ if }\langle \phi_i,\psi_j\rangle\neq 0\\ 1 & \text{ if }\langle \phi_i,\psi_j\rangle= 0\end{array}\right.$$
Then $\langle\phi,(X\otimes I_k)\psi\rangle=0$ implies
$$\sum_{(i,j)\in {\bf b}\times {\bf c}}|\langle \phi_i,\psi_j\rangle|=0.$$
Hence for all $(i,j)\in{\bf b}\times{\bf c}, \phi_i\perp\psi_j$.

Let $\hat F_{ba}=\operatorname{linspan}\{\phi_i|\phi\in F_{ba}, \phi=\sum_{i\in {\bf b}}e_i\otimes \phi_i\}$ and $\hat F_{ca}=\operatorname{linspan}\{\psi_i|\psi\in F_{ca}, \psi=\sum_{i\in {\bf c}}e_i\otimes \psi_i\}$. We have $\hat F_{ba}\perp \hat F_{ca}$ and $F_{ba}\subset V_b\otimes \hat F_{ba}$ for all $b\in[N]$. We shall prove the converse inclusion.

Let $x\in V_b$ and $\phi\in \hat F_{ba}$. Then $x\otimes \phi\in V_b\otimes \hat F_{ba}$ hence $\forall\psi\in F_{ca}, \langle x\otimes \phi,\psi\rangle=0$ for any $c\neq b$. Since $\{F_{ba}\}_{b\in[N]}$ is a partition of $V_a\otimes \mathbb C^k$, $x\otimes\phi\in F_{ba}$. Hence $V_b\otimes \hat F_{ba}\subset F_{ba}$ and $F_{ba}=V_b\otimes \hat F_{ba}$. Since $\{F_{ba}\}_{a\in[N]}$ is a partition of $V_b\otimes \mathbb C^k$, $\{\hat F_{ba}\}_{a\in[N]}$ is a partition of $\mathbb C^k$.
\end{proof}

\subsection{Comparing subalgebras}
\label{sec:comparing-sum}

In this section, we would like to investigate the relation between the sets
$$\mathcal U^{(\text{inv})}_i = \{U \in \mathcal U_{nk} \, : \, S_{U,\beta}(\mathcal A_i) \subseteq \mathcal A_i \},$$
for the subalgebras $\mathcal A_{1,2} \subseteq M_n(\mathbb C)$ with the property that $\mathcal A_1 \subseteq \mathcal A_2$, focusing on the Heisenberg picture. Note that the two trivial extremal cases, $\mathcal A = \mathbb C I$ and $\mathcal A = M_n(\mathbb C)$ lead both to the set of all unitary operators. In a situation where both algebras are non-trivial, we claim that the two sets of unitary operators leaving them invariant are not comparable. We shall consider the simplest example, and leave the extension to the general case to the reader. 

Let $n=3$ and consider
\begin{align*}
\mathcal A_1 &= M_1(\mathbb C) \oplus M_1(\mathbb C) \oplus M_1(\mathbb C) \subset M_3(\mathbb C)\\
\mathcal A_2 &= M_2(\mathbb C) \oplus M_1(\mathbb C) \subset M_3(\mathbb C)
\end{align*}

To show that $\mathcal U^{(\text{inv})}_1 \nsubseteq \mathcal U^{(\text{inv})}_2$, consider the following operator
$$U = \begin{bmatrix}
e_1 e_1^* & e_2 e_1^* & e_3 e_1^* \\
e_3 e_2^* & e_1 e_2^* & e_2 e_2^* \\
e_2 e_3^* & e_3 e_3^* & e_1 e_3^* 
\end{bmatrix}.$$

Obviously, the operator $U$ above satisfies the conditions \ref{it:row-final}, \ref{it:col-initial} from Definition \ref{def:bistochastic-PI}, so $U \in \mathcal U^{(\text{inv})}_1$. However, for the equivalence relation $1 \sim 2 \nsim 3$ induced by the subalgebra $\mathcal A_2$, and the choice of indices $(i,j,x,y) = (1,2,2,3)$, equation \eqref{eq:detailed_cond} from the proof of Theorem \ref{thm:block-diagonal-Heisenberg} is not satisfied:
$$\tilde U_{ix}^* \tilde U_{jy} = (e_2e_1^*)^* e_2 e_2^* = e_1 e_2^* \neq 0,$$
hence $U \notin \mathcal U^{(\text{inv})}_2$.

To show that the reversed inclusion also fails to hold, consider a unitary matrix $W \in \mathcal U_4$ partitioned in $2 \times 2$ blocks $W_{ij}$: 
$$W = \begin{bmatrix}
W_{11} & W_{12} \\
W_{21} & W_{22}
\end{bmatrix}.$$
We construct the following unitary matrix $U \in \mathcal U_9$:
$$U = \begin{bmatrix}
0 & 0 & 0 & 0 & 0 & 0 & 0 & 1 & 0 \\
0 & \multicolumn{2}{c}{\multirow{2}{*}{$W_{11}$}} & 0 & \multicolumn{2}{c}{\multirow{2}{*}{$W_{12}$}} & 0 & 0 & 0 \\
0 & & & 0 &  &  & 0 & 0 & 0 \\
0 & 0 & 0 & 0 & 0 & 0 & 0 & 0 & 1 \\
0 & \multicolumn{2}{c}{\multirow{2}{*}{$W_{21}$}} & 0 & \multicolumn{2}{c}{\multirow{2}{*}{$W_{22}$}} & 0 & 0 & 0 \\
0 &  &  & 0 &  &  & 0 & 0 & 0 \\
0 & 0 & 0 & 0 & 0 & 0 & 1 & 0 & 0 \\
1 & 0 & 0 & 0 & 0 & 0 & 0 & 0 & 0 \\
0 & 0 & 0 & 1 & 0 & 0 & 0 & 0 & 0
\end{bmatrix}.$$

One can check by direct computation that $U \in \mathcal U^{(\text{inv})}_2$. However if one chooses $W$ in such a way that one of the blocks $W_{ij}$ is \emph{not} a partial isometry, then $U \notin \mathcal U^{(\text{inv})}_1$. This is always possible, since one can simply choose any small enough matrix $W_{11} \in M_2(\mathbb C)$ and complete it to a full unitary matrix $W \in \mathcal U_4$.

\section{The tensor product algebra}\label{sec:tensor-product}

We now analyze the case where there is just one term in the direct sum in \eqref{eq:C-star-subalgebra}, that is 
$$\mathcal A = M_d(\mathbb C) \otimes I_r.$$
In particular, the system Hilbert space decomposes as $\mathbb C^n = \mathbb C^d \otimes \mathbb C^r$, with $n=dr$. We shall denote by $\omega_k \in \mathbb C^k \otimes \mathbb C^k$ the (un-normalized) \emph{maximally entangled state}
\begin{equation}\label{eq:def-omega}
\omega_k = \sum_{i=1}^k e_i \otimes e_i,
\end{equation}
for some fixed orthonormal basis $\{e_1, \ldots, e_k\}$ of $\mathbb C^k$.

\subsection{The Heisenberg picture}

\begin{theorem}\label{thm:tensor-product-H}
Let $U \in \mathcal U_{nk}$ be a bipartite unitary operator. The following are equivalent:
\begin{enumerate}
\item \label{it:tensor-product-H-i} There exists a set of quantum states $\mathcal B$ which spans $M_k(\mathbb C)$ such that for any $\beta\in\mathcal B$, the subalgebra $\mathcal A$ is stable by the UCP map $S_{\beta,U}$:
$$\forall \beta\in\mathcal B,\qquad S_{U,\beta}(\mathcal A)\subset\mathcal A.$$
\item \label{it:tensor-product-H-ii} There exist unitary operators $V \in \mathcal U_{rk}$ and $W \in \mathcal U_{dk}$ such that
\begin{equation}\label{eq:U-tensor-product-H}
U = (I_d \otimes V) \cdot (W \otimes I_r).
\end{equation}
\end{enumerate}
\end{theorem}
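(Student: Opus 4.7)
The implication $(\ref{it:tensor-product-H-ii})\Rightarrow(\ref{it:tensor-product-H-i})$ is a direct computation. Substituting $U=(I_d\otimes V)(W\otimes I_r)$ into the definition of $S_{U,\beta}$ and using that $I_d\otimes V$ commutes with $X\otimes I_r\otimes I_k$ for every $X\in M_d(\mathbb C)$, the $V$-factors cancel between $U^*$ and $U$. What remains is the standard Stinespring action of $W$ on $X$ with ancilla $\beta$, tensored with $I_r$:
\[S_{U,\beta}(X\otimes I_r)=S_{W,\beta}(X)\otimes I_r\in M_d(\mathbb C)\otimes I_r,\]
so the invariance holds independently of $\beta$.

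For the nontrivial direction $(\ref{it:tensor-product-H-i})\Rightarrow(\ref{it:tensor-product-H-ii})$, the plan is to translate the subalgebra preservation into a block-by-block constraint on $U$. By the spanning hypothesis on $\mathcal B$ and the linearity of $\beta\mapsto S_{U,\beta}$, the invariance extends to all $\beta\in M_k(\mathbb C)$, in particular to the rank-one matrices $\beta=e_ae_b^*$. Decomposing $U$ along its $S$-indices,
\[U=\sum_{s',s=1}^{d}e_{s'}e_s^*\otimes \tilde U^{s's},\qquad \tilde U^{s's}\in M_{rk}(\mathbb C),\]
and expanding $S_{U,e_ae_b^*}(X\otimes I_r)$ in these blocks, a mechanical computation shows that the invariance condition is equivalent to the single key identity
\begin{equation*}
(\tilde U^{s_1's_1})^*\,\tilde U^{s_2's_2}\in I_r\otimes M_k(\mathbb C)\qquad\text{for all } s_1',s_1,s_2',s_2\in[d].
\end{equation*}

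To produce $V$ and $W$ from this identity, I choose indices $(s_0',s_0)$ such that the positive matrix $P_0\in M_k(\mathbb C)$ defined by $(\tilde U^{s_0's_0})^*\tilde U^{s_0's_0}=I_r\otimes P_0$ is invertible; the relation $\sum_{s'}P(s',s_0)=I_k$ (which comes from reading off the $(s_0,s_0)$-block of $U^*U=I_{nk}$, with $P(s',s)$ defined by $(\tilde U^{s's})^*\tilde U^{s's}=I_r\otimes P(s',s)$) guarantees that either some individual block or a suitable linear combination $\sum c_{s's}\tilde U^{s's}$ will do the job. The polar decomposition then gives $\tilde U^{s_0's_0}=V\cdot(I_r\otimes P_0^{1/2})$ with $V\in\mathcal U_{rk}$ unitary. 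Applying the key identity to each pair $(s_0',s_0,s',s)$ yields $V^*\tilde U^{s's}=I_r\otimes W^{s's}$ with an explicit $W^{s's}\in M_k(\mathbb C)$, and assembling $W:=\sum_{s',s}e_{s'}e_s^*\otimes W^{s's}$, the column orthogonality $\sum_{s'}(\tilde U^{s's_1})^*\tilde U^{s's_2}=\delta_{s_1s_2}I_{rk}$ (the $(s_1,s_2)$-block of $U^*U=I_{nk}$) translates through the factorization into $W^*W=I_{dk}$, so $W\in\mathcal U_{dk}$. Recombining the blocks delivers $U=(I_d\otimes V)(W\otimes I_r)$.

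The main obstacle is to exhibit $V$ as a genuine unitary rather than a mere partial isometry: the key identity forces all $\tilde U^{s's}$ to live in a single left coset $V\cdot(I_r\otimes M_k)$ of $M_{rk}$, but realizing that coset by a unitary $V$ requires a pivot with invertible $P_0$. If every individual block is rank-deficient, one must argue by a genericity or linear combination trick that some $A=\sum c_{s's}\tilde U^{s's}$ still satisfies $A^*A=I_r\otimes P(c)$ with $P(c)$ invertible, after which the polar-decomposition step runs exactly as above and the remainder of the construction is purely algebraic.
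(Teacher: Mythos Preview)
Your approach is sound and genuinely different from the paper's. The paper works at the level of CP maps: it rewrites $S_{U,\beta}$ via a partial transposition, invokes the universal property of the tensor product together with complete positivity to obtain a Kraus form, and then compares Choi matrices / square roots to extract $V$ and $W$. Your route is more concrete and elementary: expand $U$ in $d\times d$ blocks $\tilde U^{s's}\in M_{rk}$, read the invariance condition block-by-block, and obtain the key identity $(\tilde U^{ab})^*\tilde U^{cd}\in I_r\otimes M_k(\mathbb C)$. That reduction is correct, and once a unitary $V\in\mathcal U_{rk}$ with $V^*\tilde U^{s's}\in I_r\otimes M_k$ is found, the assembly of $W$ and the verification that $W\in\mathcal U_{dk}$ from $U^*U=I$ are exactly as you describe.

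The genuine gap is the existence of the pivot. You assert that ``either some individual block or a suitable linear combination $\sum c_{s's}\tilde U^{s's}$ will do the job'', but neither is justified. The relation $\sum_{s'}P(s',s_0)=I_k$ only says the supports of the $P(s',s_0)$ cover $\mathbb C^k$; it does not force any single $P(s',s_0)$ to be invertible, and the ``genericity'' claim for linear combinations is not automatic either: a linear space of matrices can consist entirely of singular elements (e.g.\ the $3\times3$ skew-symmetric matrices), so one cannot simply appeal to a determinant being a nonzero polynomial without using the specific structure at hand. You correctly flag this as the main obstacle, but the proposed fix is not an argument.

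The gap can be closed without ever producing an invertible pivot. Fix a row index $s_0'$ and set $B_s:=\tilde U^{s_0's}$. The unitarity relation $UU^*=I$ gives $\sum_s B_sB_s^*=I_{rk}$, so the block row $M:=[B_1\,|\cdots|\,B_d]$ is a coisometry, while the key identity makes $M^*M$ equal (after reordering) to $I_r\otimes G$ with $G\in M_{dk}$ a rank-$k$ projection. Writing $G=JJ^*$ for an isometry $J=\sum_s e_s\otimes J_s:\mathbb C^k\to\mathbb C^{dk}$, the operator $\tilde M:=[I_r\otimes J_1^*\,|\cdots|\,I_r\otimes J_d^*]$ is another coisometry with $\tilde M^*\tilde M=M^*M$, hence $M=V\tilde M$ for a \emph{unitary} $V\in\mathcal U_{rk}$. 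This yields $B_s=V(I_r\otimes J_s^*)$, and then for every other block one uses $\sum_s J_s^*J_s=I_k$ together with the key identity to conclude $V^*\tilde U^{s's}\in I_r\otimes M_k$. With this correction your argument goes through and gives a pleasant, Choi-matrix-free alternative to the paper's proof.
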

\begin{proof}
We first show $\eqref{it:tensor-product-S-ii}  \implies \eqref{it:tensor-product-S-i}$. For an interaction unitary operator $U$ as in \eqref{eq:U-tensor-product-H}, we have
\begin{align*}
S_{U, \beta}(X \otimes I_r) &= (\operatorname{id}_d \otimes \operatorname{id}_r \otimes \operatorname{Tr}_k) \left[ (W^* \otimes I_r)  (I_d \otimes V^*)  (X \otimes I_r \otimes I_k) (I_d \otimes V) (W \otimes I_r) (I_{dr} \otimes \beta) \right] \\
 &= (\operatorname{id}_d \otimes \operatorname{id}_r \otimes \operatorname{Tr}_k) \left[ (W^* \otimes I_r)   (X \otimes I_r \otimes I_k)  (W \otimes I_r) (I_{dr} \otimes \beta) \right] \\
  &= (\operatorname{id}_d  \otimes \operatorname{Tr}_k) \left[ W^*   (X  \otimes I_k)  W (I_{d} \otimes \beta) \right] \otimes I_r \\
&= S_{W,\beta}(X) \otimes I_r \in \mathcal A,
\end{align*}
proving the claim; a graphical representation of the computation above can be found in Figure \ref{fig:U-tensor-product-H}.

\begin{figure}[htbp] 
\includegraphics{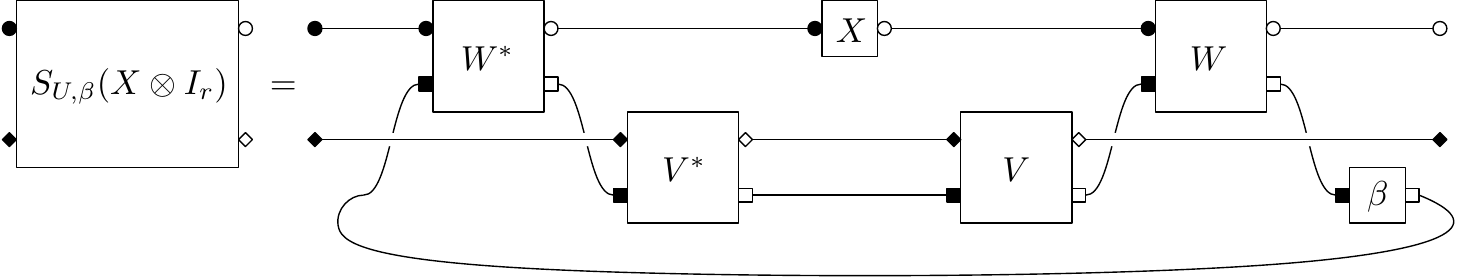}\\\vspace{.6cm}
\includegraphics{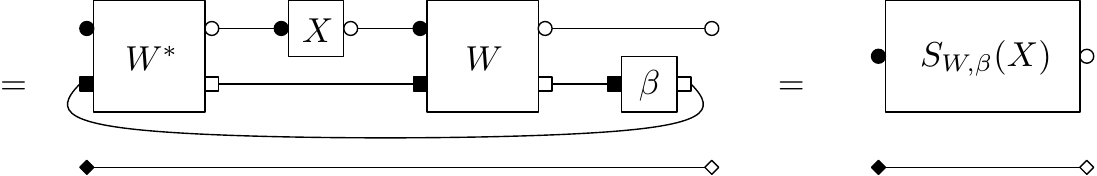}
\caption{Bipartite unitary operators as in \eqref{eq:U-tensor-product-H} yield $\mathcal A$-preserving UCP maps. Round, diamond, and respectively square decorations correspond to the Hilbert spaces $\mathbb C^d$, $\mathbb C^r$, and respectively $\mathbb C^k$.} 
\label{fig:U-tensor-product-H}
\end{figure}

For the reverse implication, using linearity and the hypothesis, we find that there is a bi-linear function $f_U(X, \beta)$ such that
$$\forall X \in M_n(\mathbb C), \, \beta \in M_k(\mathbb C), \qquad S_{U,\beta}(X \otimes I_r) = f_U(X, \beta) \otimes I_r.$$
We rewrite the last equation as 
$$(\operatorname{id}_d \otimes \operatorname{id}_r \otimes \operatorname{Tr}_k) \left[ (U^\Gamma)^*  (X \otimes I_r \otimes \beta^\top) U^\Gamma \right] = f_U(X, \beta) \otimes I_r,$$
where $U^\Gamma$ is the partial transpose of the unitary operator $U$ with respect to the $\mathbb C^k$ tensor factor: $U^\Gamma = (\operatorname{id}_d \otimes \operatorname{id}_r \otimes \operatorname{transp}_k)(U)$.
From the universal property of the tensor product $M_n(\mathbb C) \otimes M_k(\mathbb C)$, we find that there exists a linear function $g_U: M_{dk}(\mathbb C) \to M_d(\mathbb C)$ such that
$$(\operatorname{id}_d \otimes \operatorname{id}_r \otimes \operatorname{Tr}_k) \left[ (U^\Gamma)^*  (X \otimes I_r \otimes \beta^\top) U^\Gamma \right] = g_U(X \otimes \beta^\top) \otimes I_r.$$
Moreover, since the left hand side above is a completely positive function of the input $X \otimes \beta^\top$, the same must hold for $g_U$. Hence, from the Kraus form of CP maps, we can find, for some $s \geq 1$, a linear operator $A: \mathbb C^{ds} \to \mathbb C^{dk}$ such that $g_U(X \otimes \beta) = (\operatorname{id}_d \otimes \operatorname{Tr}_s)\left[ A^* (X \otimes \beta) A \right]$; note that we have dropped the transposition operation on $\beta$, using the fact that a family $\mathcal B$ spans $M_k(\mathbb C)$ iff $\mathcal B^\top$ does. We have thus, for all $X$ and $\beta$, 
$$(\operatorname{id}_d \otimes \operatorname{id}_r \otimes \operatorname{Tr}_k) \left[ (U^\Gamma)^*  (X \otimes I_r \otimes \beta) U^\Gamma \right] = (\operatorname{id}_d \otimes \operatorname{id}_r \otimes \operatorname{Tr}_s) \left[ A^*  (X \otimes I_r \otimes \beta) A \right],$$
where we recognize two different dilations of a CP map. On the level of the Choi matrices, the equality above translates to 
\begin{align*}
(\operatorname{id}_d \otimes \operatorname{id}_d \otimes \operatorname{id}_r \otimes \operatorname{Tr}_k \otimes \operatorname{id}_k) \left[(I_d \otimes (U^\Gamma)^* \otimes I_k ) (\omega_d\omega_d^* \otimes I_r \otimes \omega_k\omega_k^*) (I_d \otimes U^\Gamma \otimes I_k)  \right] = \\
(\operatorname{id}_d \otimes \operatorname{id}_d \otimes \operatorname{Tr}_s \otimes \operatorname{id}_k) \left[(I_d \otimes A^* \otimes I_k ) (\omega_d\omega_d^* \otimes \omega_k\omega_k^*) (I_d \otimes A \otimes I_k)  \right] \otimes I_r \iff \\
(I_d \otimes U^* ) (\omega_d\omega_d^* \otimes I_r \otimes I_k) (I_d \otimes U) = \left[(I_d \otimes (A^\Gamma)^*) (\omega_d\omega_d^* \otimes I_s) (I_d \otimes A^\Gamma )\right]\otimes I_r,
\end{align*}
where $A^\Gamma : \mathbb C^d \otimes \mathbb C^k \to \mathbb C^d \otimes \mathbb C^s$ is the partial transposition of the operator $A$ with respect to the second subsystems. The two square roots of the last equation above are related by a partial isometry  $V:\mathbb C^{rs} \to \mathbb C^{rk}$ as follows
\begin{align*}
(\omega_d^* \otimes I_r \otimes I_k)(I_d \otimes U) = (I_d \otimes I_d \otimes V)\left[(\omega_d \otimes I_s)(I_d \otimes A^\Gamma) \otimes I_r \right] \iff
U = (I_d \otimes V)(A^\Gamma \otimes I_r).
\end{align*}
Let us set $W:=A^\Gamma$. Since $U$ is an unitary operator, we have 
$$I_{drk} = U^*U = (W^* \otimes I_r) (I_d \otimes V^*) (I_d \otimes V) (W \otimes I_r) .$$
and thus $V^*V$ must have full rank, i.e.~$V$ is an isometry; in particular, $s \leq k$. But then, $I_{drk} = W^*W \otimes I_r$, hence $W$ is also an isometry, implying $k \leq s$ . We have thus $s=k$, and $V,W$ are unitary operators, as claimed. 
\end{proof}

Let us consider some special cases of the result above. In the $d=1$ or $r=1$ case, requiring that the algebra $\mathcal A$ should be preserved does not impose additional constraints, so we recover the set of all unitary matrices $\mathcal U_{nk}$. If $k=1$, the ancilla space is trivial, and the UCP map acts by unitary conjugation. It is clear then that the unitary operator $U$ must be a tensor product, $U=W \otimes V$, with $W \in \mathcal U_d$ and $V \in \mathcal U_r$.

\subsection{The Schr\"odinger picture}

Before stating our main result, we introduce the set of bipartite unitary operators with the property that their partial transpose is also unitary
\begin{equation}\label{eq:Uunital}
\mathcal U_{unital}:= \{U \in \mathcal U_{nk} \, : \, U^\Gamma \in \mathcal U_{nk} \} = \mathcal U_{nk} \cap \mathcal U_{nk}^\Gamma.
\end{equation}

This set has been studied in \cite{dnp}, and we refer the reader to that paper for additional properties of such unitary operators.

\begin{theorem}\label{thm:tensor-product-S}
Let $U \in \mathcal U_{nk}$ be a bipartite unitary operator. The following are equivalent:
\begin{enumerate}
\item \label{it:tensor-product-S-i} There exists a set of quantum states $\mathcal B$ which spans $M_k(\mathbb C)$ such that for any $\beta\in\mathcal B$, the subalgebra $\mathcal A$ is stable by the quantum channel $T_{\beta,U}$:
$$\forall \beta\in\mathcal B,\qquad T_{U,\beta}(\mathcal A)\subset\mathcal A.$$
\item \label{it:tensor-product-S-ii} There exist unitary operators $V \in \mathcal U_{dk}$ and $W \in \mathcal U_{rk} \cap \mathcal U_{rk}^\Gamma$ such that
\begin{equation}\label{eq:U-tensor-product-S}
U = (I_d \otimes W^\Gamma) \cdot (V \otimes I_r).
\end{equation}
\end{enumerate}
\end{theorem}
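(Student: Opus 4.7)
The plan is to follow the template of the Heisenberg analog (Theorem \ref{thm:tensor-product-H}): a direct verification for one implication, then a Stinespring/Choi uniqueness argument for the reverse. The novelty specific to the Schrödinger picture is that one of the two factors in the decomposition must additionally lie in $\mathcal U_{unital}$.

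For $\eqref{it:tensor-product-S-ii} \Rightarrow \eqref{it:tensor-product-S-i}$, I would plug $U = (I_d \otimes W^\Gamma)(V \otimes I_r)$ into the definition of $T_{U,\beta}$. The factor $V \otimes I_r$ acts trivially on the $r$-register, so conjugation turns $X \otimes I_r \otimes \beta$ into $V(X \otimes \beta)V^* \otimes I_r$. Expanding $V(X\otimes\beta)V^* = \sum_\alpha C_\alpha \otimes D_\alpha$ and then tracing out $k$ after conjugation by $I_d \otimes W^\Gamma$ reduces everything to the identity
\[
\operatorname{Tr}_k\bigl[W^\Gamma (I_r \otimes D)(W^\Gamma)^*\bigr] = \operatorname{Tr}(D)\, I_r \qquad \forall D \in M_k,
\]
which a short expansion of $W$ in Kraus form shows is equivalent to $W^\Gamma (W^\Gamma)^* = I_{rk}$, automatic from $W \in \mathcal U_{unital}$. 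The result is $T_{U,\beta}(X \otimes I_r) = T_{V,\beta}(X) \otimes I_r \in \mathcal A$.

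For the reverse direction $\eqref{it:tensor-product-S-i} \Rightarrow \eqref{it:tensor-product-S-ii}$, I would adapt the Heisenberg argument; the Schrödinger picture is actually slightly cleaner, since no preliminary partial-transpose rewriting intervenes. From the preservation condition and linearity, we get a bilinear $f_U$ with $T_{U,\beta}(X \otimes I_r) = f_U(X,\beta) \otimes I_r$; the universal property promotes $f_U$ to a CP map $h_U \colon M_{dk} \to M_d$, and Kraus' theorem provides $A \colon \mathbb C^{dk} \to \mathbb C^{ds}$ with $h_U(Y) = (\operatorname{id}_d \otimes \operatorname{Tr}_s)[A Y A^*]$. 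Substituting yields
\[
(\operatorname{id}_{dr} \otimes \operatorname{Tr}_k)\bigl[U(X \otimes I_r \otimes \beta)U^*\bigr] = (\operatorname{id}_d \otimes \operatorname{Tr}_s)\bigl[A(X \otimes \beta)A^*\bigr] \otimes I_r .
\]
Passing to Choi matrices by substituting $X$ and $\beta$ with maximally entangled states on doubled spaces (as in the Heisenberg proof), and invoking the uniqueness of Stinespring dilations up to a partial isometry, I would extract $V \colon \mathbb C^{rs} \to \mathbb C^{rk}$ with $U = (I_d \otimes V)(A \otimes I_r)$; the unitarity of $U$ then forces $s = k$ and makes both $V$ and $A$ unitary, by the same dimension count as in Theorem \ref{thm:tensor-product-H}.

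Finally, I would upgrade $V$ to the form $W^\Gamma$ with $W \in \mathcal U_{unital}$. Plugging $U = (I_d \otimes V)(A \otimes I_r)$ back into \eqref{it:tensor-product-S-i} and re-running the easy-direction computation, preservation of $\mathcal A$ forces
\[
\operatorname{Tr}_k\bigl[V(I_r \otimes D) V^*\bigr] = \operatorname{Tr}(D)\, I_r \qquad \forall D \in M_k,
\]
since $A(X \otimes \beta)A^*$ spans $M_{dk}$ as $X$ and $\beta$ vary (by unitarity of $A$ and the spanning hypothesis on $\mathcal B$). As before, this is equivalent to $V^\Gamma$ being unitary, so $W := V^\Gamma$ belongs to $\mathcal U_{rk} \cap \mathcal U_{rk}^\Gamma = \mathcal U_{unital}$, and renaming $A$ as $V$ completes the decomposition. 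I expect the main obstacle to be the Choi-matrix/Stinespring uniqueness step producing the factorization of $U$; the genuinely new ingredient beyond the Heisenberg picture is the observation that the $(r,k)$-factor must be promoted from $\mathcal U_{rk}$ to $\mathcal U_{unital}$.
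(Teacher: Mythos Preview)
Your approach matches the paper's almost exactly: the easy direction is a direct computation, and the hard direction runs the same Choi/Stinespring-uniqueness argument as in Theorem~\ref{thm:tensor-product-H} to produce a factorization $U=(I_d\otimes V)(A\otimes I_r)$ with $V,A$ unitary, after which one extracts the extra $\mathcal U_{unital}$ constraint on the $(r,k)$-factor. The only substantive difference is how that last constraint is obtained: the paper reads it off directly from $UU^*=I_{drk}$ (once $V$ and $A$ are known to be unitary, this forces $W^\Gamma$ unitary), while you plug the factorization back into the preservation hypothesis and reduce to the \cite{dnp} characterization. Both routes are valid and roughly equal in length.

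One correction in your easy direction: the identity
\[
\operatorname{Tr}_k\bigl[W^\Gamma(I_r\otimes D)(W^\Gamma)^*\bigr]=\operatorname{Tr}(D)\,I_r\quad\forall D\in M_k
\]
is \emph{not} equivalent to $W^\Gamma(W^\Gamma)^*=I_{rk}$. It says precisely that the channels $T_{W^\Gamma,\beta}$ are unital for all $\beta$, which by \cite[Theorem 3.1]{dnp} is equivalent to $(W^\Gamma)^\Gamma=W$ being unitary. Your conclusion is still correct because $W\in\mathcal U_{unital}$ gives both $W$ and $W^\Gamma$ unitary, but the stated justification should be fixed---especially since in the hard direction you (correctly) invoke the same identity with $V$ in place of $W^\Gamma$ to deduce that $V^\Gamma$ is unitary, and the ``as before'' reference is then inconsistent with what you wrote earlier.
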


\begin{proof}
Let us first show $\eqref{it:tensor-product-S-ii}  \implies \eqref{it:tensor-product-S-i}$. Consider a bipartite unitary $U$ as in \eqref{eq:U-tensor-product-S}; for any $A \in M_d(\mathbb C)$ and $\beta \in M_k(\mathbb C)$, we have
\begin{align*}
T_{U, \beta}(A \otimes I_r) &= (\operatorname{id}_d \otimes \operatorname{id}_r \otimes \operatorname{Tr}_k)\left[ (I_d \otimes W)  (V \otimes I_r)  (A \otimes I_r \otimes \beta)  (V^* \otimes I_r)  (I_d \otimes W^*) \right] \\
&= (I_d \otimes I_r \otimes \omega_k^* )  (V \otimes W^\Gamma)  (A \otimes \beta \otimes I_r \otimes I_r)  (V^* \otimes W^{\Gamma*})  (I_d \otimes I_r \otimes \omega_k)\\
&= (\operatorname{id}_d \otimes \operatorname{Tr}_k)\left[ V  (A \otimes \beta)  V^* \right] \otimes I_r \in \mathcal A,
\end{align*}
where $\omega$ is the maximally entangled state \eqref{eq:def-omega}. The computation above is represented graphically in Figure \ref{fig:U-tensor-product-S}.

\begin{figure}[htbp] 
\includegraphics{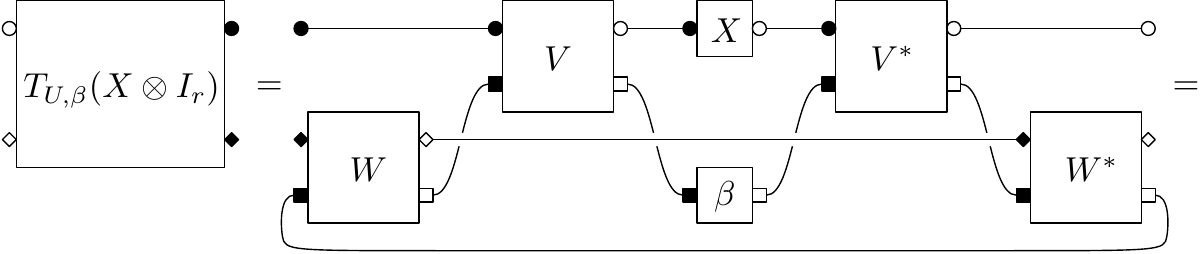}\\\vspace{.6cm}
\includegraphics{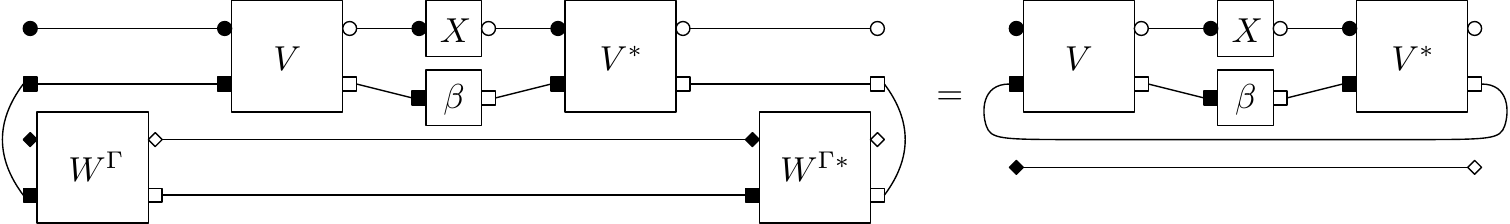}
\caption{Bipartite unitary operators as in \eqref{eq:U-tensor-product-S} yield $\mathcal A$-preserving channels.} 
\label{fig:U-tensor-product-S}
\end{figure}

Let us now prove $\eqref{it:tensor-product-S-i}  \implies \eqref{it:tensor-product-S-ii}$. We follows the steps in the proof of Theorem \ref{thm:tensor-product-H}. Using linearity, the universal property of the tensor product and complete positivity, we find that there are operators $V:\mathbb C^{dk} \to \mathbb C^{ds}$ and $W:\mathbb C^{kr} \to \mathbb C^{sr}$, for some $s \geq 1$, such that 
$$\omega_k^* U = \omega_s (V \otimes W),$$
see Figure \ref{fig:UVW-tensor-product-S}, left panel. Moreover, the application $W^*$ is an isometry, so $s \leq k$. Denoting by $W^\Gamma$ the partial transposition of $W$ with respect to the spaces $\mathbb C^k$ and $\mathbb C^s$, that is
$$W^\Gamma = \sum_{a,b=1}^r \sum_{i=1}^k \sum_{j=1}^s \langle e_i \otimes e_a ,W e_j \otimes e_b \rangle \, e_j \otimes e_a \cdot  e_i^* \otimes e_b^*,$$
we have
$$U=(I_d \otimes W^\Gamma)(V \otimes I_r),$$
as claimed; see Figure \ref{fig:UVW-tensor-product-S}, right panel. From the trace preservation condition, we get that $V$ is an isometry, and thus $k\leq s$. We conclude that $k=s$, and thus both $V$ and $W$ are unitary operators. Asking that $UU^*=I_{dkr}$ yields $G^\Gamma$ unitary, proving the final claim and finishing the proof. 

\begin{figure}[htbp] 
\includegraphics{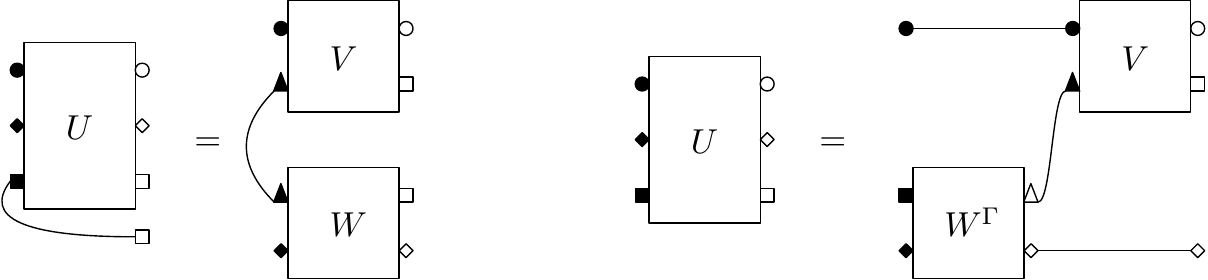}
\caption{The relation between the operators $U,V,W$. Triangle shaped labels correspond to the space $\mathbb C^s$; in the proof, it is shown that $s=k$.} 
\label{fig:UVW-tensor-product-S}
\end{figure}
\end{proof}

As in the Heisenberg case, we discuss next some special cases of Theorem \ref{thm:tensor-product-S}. In the case where $r=1$, $\mathcal A$ is the full input matrix algebra, so we recover the set of all unitary matrices $\mathcal U_{nk}$. If $k=1$, the ancilla space is trivial, and the quantum channel acts by unitary conjugation; hence, $U$ must be a tensor product, $U=V \otimes W$, with $V \in \mathcal U_d$ and $W \in \mathcal U_r$. Finally, if $d=1$, we just require of the quantum channels $T_{U,\beta}$ to be unital, independently of the value of $\beta$; but this is precisely Theorem \cite[Theorem 3.1]{dnp}: the unitary operator $U$ must be such that its partial transpose is also unitary $U = W \in \mathcal U_{rk} \cap \mathcal U_{rk}^\Gamma$.

\subsection{Comparing subalgebras}
\label{sec:comparing-tensor}

We would like to address here the same question as the one in Section \ref{sec:comparing-sum}, but with the following choice of subalgebras:
\begin{align*}
\mathcal A_1 &= M_d(\mathbb C) \otimes I_r\\
\mathcal A_2 &= \underbrace{M_d(\mathbb C) \oplus M_d(\mathbb C) \oplus \cdots \oplus M_d(\mathbb C)}_{r \text{ times}},
\end{align*}
where $d,r \geq 2$ are two arbitrary integers, and $n=dr$. It is clear that $\mathcal A_1 \subseteq \mathcal A_2$. As in the previous case, we show that the two sets of unitary operators that leave invariant the above subalgebras cannot be compared. 

Let us start by considering a general element from $\mathcal U^{(\text{inv})}_1$, of the form $U = (I_d \otimes V)(W \otimes I_r)$, see Theorem \ref{thm:tensor-product-H}. In order to check the conditions in Theorem \ref{thm:block-diagonal-Heisenberg}, we decompose 
$$V = \sum_{a,b=1}^r V_{ab} \otimes e_a e_b^*,$$
and obtain the following expression for the blocks of $U$:
$$U_{ab} = (I_d \otimes V_{ab})W \in M_{dk}(\mathbb C).$$
Since $W$ is unitary, an operator as above is a partial isometry iff $V_{ab}$ is one; but one can easily construct a unitary matrix $V$ having at least one block which is not a partial isometry (one can pick any small enough matrix $V_{11}$ which is not a partial isometry and extend it to a unitary matrix). This shows that $\mathcal U^{(\text{inv})}_1 \nsubseteq \mathcal U^{(\text{inv})}_2$.

To show that the reversed inclusion also fails to hold, consider the case $k=r$ and choose two families $\{e_{ab}\}$, $\{f_{ab}\}$ of $k^2$ vectors, such that the conditions corresponding to conditions \ref{it:row-final}, \ref{it:col-initial} of Definition \ref{def:bistochastic-PI} are satisfied:
\begin{align*}
\forall b, \, &\{e_{ab}\}_a \text{ is an orthonormal basis of $\mathbb C^k$}\\
\forall a, \, &\{f_{ab}\}_b \text{ is an orthonormal basis of $\mathbb C^k$}.
\end{align*}
Define
$$U:= \sum_{a,b=1}^{r=k} g_ag_b^* \otimes \tilde V_{ab} \otimes f_{ab}e_{ab}^*,$$
where $\{g_a\}$ is another orthonormal basis of $\mathbb C^r = \mathbb C^k$ and $\{\tilde V_{ab}\}$ is an arbitrary family of unitary operators acting on $\mathbb C^d$. The operator defined above leaves invariant the algebra $\mathcal A_2$, since its blocks
$$U_{ab} = \tilde V_{ab} \otimes f_{ab}e_{ab}^*$$
are partial isometries satisfying the hypotheses of Theorem \ref{thm:block-diagonal-Heisenberg} with initial and final spaces given by
$$E_{ab} = \mathbb C^d \otimes \mathbb C e_{ab} \quad \text{and} \quad F_{ab} = \mathbb C^d \otimes \mathbb C f_{ab}.$$
Let us assume now that the unitary $U$ described above also leaves the algebra $\mathcal A_2$ invariant; from Theorem \ref{thm:tensor-product-H}, we find unitary operators $V \in \mathcal U_{k^2}$ and $W \in \mathcal U_{kd}$ such that $U = (I_d \otimes V)(W \otimes I_k)$. In particular, the blocks of $U$ read 
$$U_{ab} = (I_d \otimes V_{ab})W,$$
where $V_{ab}$ are the $k \times k$ blocks of the unitary operator $V$. Muliplying the equation above with the adjoint of the same expression for another index pair $(\alpha,\beta)$, we get
$$I_d \otimes V_{ab}V_{\alpha\beta}^* = \langle e_{ab} , e_{\alpha \beta} \rangle \tilde V_{ab} \tilde V_{\alpha \beta}^* \otimes f_{ab}f_{\alpha \beta}^*.$$
A contradiction occurs in the relation above if there is a quadruple $(a,b,\alpha, \beta)$ such that $e_{ab} \not\perp e_{\alpha \beta}$ and $\tilde V_{ab} \tilde V_{\alpha \beta}^* \notin \mathbb C I_d.$

\section{Zero block algebra}
\label{sec:zero_bloc}

Let
\[\mathcal A=0_{d_0}\oplus M_{d_1}(\mathbb C)\]
be a subalgebra of $M_n(\mathbb C)$ and
\[\mathbb C^n=V_0\oplus V_1\]
the associated direct sum of orthogonal subspaces decomposition. In this decomposition, $d_0+d_1=n$ and $\{V_0,V_1\}$ is a partition of $\mathbb C^n$. In this section we want to characterize the bipartite unitary matrices generating UCP maps as in \eqref{eq:def-UCP} (resp. TPCP maps as in \eqref{eq:def-TPCP}) preserving $\mathcal A$ irrespective of the ancilla state $\beta$. It turns out that imposing the stability of $\mathcal A$ for a given positive definite $\beta$ is enough.

We consider the block decomposition of the bipartite matrices corresponding to the partition of $\mathbb C^n$ associated to $\mathcal A$. Namely for any $X\in M_{nk}(\mathbb C)\cong M_{n}(M_k(\mathbb C))$, there exists four matrices, $X_{00}\in M_{d_0 k}(\mathbb C), X_{01}\in M_{d_0k\times d_1k}(\mathbb C), X_{10}\in M_{d_1k\times d_0k}(\mathbb C)$ and $X_{11}\in M_{d_1k}(\mathbb C)$ such that
\[X=\begin{bmatrix} X_{00}&X_{01}\\X_{10}&X_{11}\end{bmatrix}.\]
The stability of $\mathcal A$ algebras in either the Schr\"odinger or Heisenberg pictures imposes the same constraints on the bipartite unitary matrices.
\begin{theorem}\label{thm:Zero_bloc_unitary}
Let $U\in \mathcal U_{nk}$ and $\mathcal A$ be the above subalgebra of $M_n(\mathbb C)$. Then the following statements are equivalent:
\begin{enumerate}
\item There exists a set of quantum states $\mathcal B$ spanning $M_k(\mathbb C)$ such that for any $\beta \in\mathcal B$, the subalgebra $\mathcal A$ is stable under the UCP map $S_{U,\beta}$:
\[\forall \beta\in\mathcal B,\quad S_{U,\beta}(\mathcal A)\subset \mathcal A.\]
\item There exists a set of quantum states $\mathcal B$ spanning $M_k(\mathbb C)$ such that for any $\beta \in\mathcal B$, the subalgebra $\mathcal A$ is stable under the TPCP map $T_{U,\beta}$:
\[\forall \beta\in\mathcal B,\quad T_{U,\beta}(\mathcal A)\subset \mathcal A.\]
\item It holds that
\[U=\begin{bmatrix}U_{00}&0\\ 0& U_{11}\end{bmatrix}\]
with $U_{00}$ and $U_{11}$ unitary matrices.
\end{enumerate}
The implications (1)$\implies$(3) and (2)$\implies$(3) hold if the stability of the algebra $\mathcal A$ holds for a positive definite state $\beta$. In other words if $\mathcal B$ is replaced in respectively (1) and (2) by a singlet $\mathcal B=\{\beta\}$ with $\beta>0$.
\end{theorem}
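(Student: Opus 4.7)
The plan is to prove the four implications $(3)\Rightarrow(1)$, $(3)\Rightarrow(2)$, $(1)\Rightarrow(3)$, $(2)\Rightarrow(3)$, noting along the way that the two converse implications can be established from the hypothesis applied to a single positive definite $\beta$, which yields the additional claim at the end of the theorem.

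The implications from $(3)$ are direct computations: when $U = U_{00} \oplus U_{11}$ is block-diagonal with respect to $V_0 \oplus V_1$, both $S_{U,\beta}$ and $T_{U,\beta}$ split as the direct sum of the zero map on the $V_0$-part of $\mathcal A$ (which is $\{0\}$) and the corresponding map generated by $U_{11}$, so $\mathcal A$ is preserved irrespective of $\beta$.

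For the nontrivial direction $(2) \Rightarrow (3)$, I would test the hypothesis on the identity $P_{V_1} = 0_{d_0} \oplus I_{d_1}$ of $\mathcal A$. Writing $U$ in $(V_0 \oplus V_1)$-block form, a direct computation shows that the $(0,0)$-block of $T_{U,\beta}(P_{V_1})$ equals
$$M(\beta) := (\operatorname{id}_{d_0} \otimes \operatorname{Tr}_k)\bigl[U_{01}(I_{d_1} \otimes \beta) U_{01}^*\bigr],$$
and the stability of $\mathcal A$ forces $M(\beta) = 0$. Taking the full trace and using cyclicity yields the scalar identity
$$\operatorname{Tr}\bigl[U_{01}^* U_{01}(I_{d_1} \otimes \beta)\bigr] = 0,$$
which is linear in $\beta$. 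In the spanning case it extends to all of $M_k(\mathbb C)$, and the substitution $\beta \leftarrow I_k$ gives $\|U_{01}\|_{\mathrm{HS}}^2 = \operatorname{Tr}(U_{01}^* U_{01}) = 0$, hence $U_{01} = 0$. In the single-state case with $\beta > 0$, positivity of $U_{01}^* U_{01}$ combined with positive definiteness of $I_{d_1} \otimes \beta$ yields the same conclusion $U_{01} = 0$. Once $U_{01} = 0$, the unitarity $UU^* = I$ yields $U_{00}U_{00}^* = I$, making $U_{00}$ unitary, and then $U^*U = I$ forces $U_{10} = 0$ and $U_{11}$ unitary, establishing (3).

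The argument for $(1) \Rightarrow (3)$ is entirely parallel: the $(0,0)$-block of $S_{U,\beta}(P_{V_1})$ is $(\operatorname{id}_{d_0} \otimes \operatorname{Tr}_k)[U_{10}^* U_{10}(I_{d_0} \otimes \beta)]$, and the same trace argument on the positive operator $U_{10}^* U_{10}$ forces $U_{10} = 0$, after which unitarity concludes as above. The only subtle point of the whole proof is this positivity-based extraction of $U_{01} = 0$ (or $U_{10} = 0$) from one trace identity, which is precisely what allows a single positive definite $\beta$ to suffice and yields the sharpening stated in the last sentence of the theorem.
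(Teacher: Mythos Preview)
Your proof is correct and follows essentially the same approach as the paper's: both arguments test stability on the projector $P_{V_1}$, extract from the vanishing of the $(0,0)$-block a trace identity of the form $\operatorname{Tr}[A(I\otimes\beta)]=0$ with $A\geq 0$, use positivity (or the spanning hypothesis) to conclude that the off-diagonal block $U_{01}$ (resp.\ $U_{10}$) vanishes, and then finish via the unitarity relations. The only cosmetic differences are that the paper phrases the first step as the pairing $\operatorname{Tr}(P_0\,S_{U,\beta}(P_1))=0$ and invokes Cochran's Lemma to deduce the diagonal blocks are unitary, whereas you read off the $(0,0)$-block directly and use the elementary fact that a square co-isometry is unitary; your handling of the two hypotheses (spanning $\mathcal B$ versus single $\beta>0$) is also slightly more explicit.
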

\begin{proof}
We start with the implication (1)$\implies$(3). Statement (1) is equivalent to 
\[\operatorname{Tr}(XS_{U,\beta}(Y))=0\]
for any $X\in M_n(\mathbb C)$ positive semidefinite such that $X_{11}=0$ and $Y\in M_n(\mathbb C)$ positive semidefinite such that $Y_{00}=0$. From the definition of $S_{U,\beta}$ in \eqref{eq:def-UCP}, we get
\[\operatorname{Tr}(U^*Y\otimes I_k U X\otimes \beta)=0.\]
Since $\beta>0$ and, $X$ and $Y$ are positive semidefinite, there exists a constant $c>0$ such that,
\[\operatorname{Tr}(U^*Y\otimes I_k U X\otimes \beta)\geq c \operatorname{Tr}(U^*Y\otimes I_k U X\otimes I_k).\]
Thus, (1) implies,
\[\operatorname{Tr}(U^*Y\otimes I_k U X\otimes I_k)=0\]
for any $X\in M_n(\mathbb C)$ positive semidefinite such that $X_{11}=0$ and $Y\in M_n(\mathbb C)$ positive semidefinite such that $Y_{00}=0$. Choosing $X=P_{0}$ and $Y=P_1$ with $P_i$ the orthogonal projector on $V_i$,
\[\operatorname{Tr}(U^*P_1\otimes I_k U P_0\otimes I_k)=0.\]
Hence $\|U_{10}\|_{HS}=0$ where $\|\cdot\|_{HS}$ is the Hilbert--Schmidt norm. Since $U$ is a unitary matrix, $U_{01}$ implies $U_{11}U_{11}^*=I_{d_1}$ and $U_{00}^*U_{00}=I_{d_0}$. Hence both $U_{00}$ and $U_{11}$ are full rank and thus unitary matrices by Cochran's Lemma \ref{lem:Cochran}. Using again the fact that $U$ is unitary, $U_{11}U_{01}^*=0$ yields statement (3).

\medskip
The proof of (2)$\implies$(3) is similar. Statement (2) is equivalent to
\[\operatorname{Tr}(XT_{U,\beta}(Y))=0\]
for any $X\in M_n(\mathbb C)$ positive semi definite such that $X_{11}=0$ and $Y\in M_n(\mathbb C)$ positive semi definite such that $Y_{00}=0$. From the definition of $T_{U,\beta}$ in \eqref{eq:def-TPCP},
\[\operatorname{tr}(X\otimes I_kUY\otimes \beta U^*)=0.\]
Since $\beta>0$, taking $X=P_0$ and $Y=P_1$ we deduce, $U_{01}=0$. Using again the fact that $U$ is unitary, statement (3) follows.

\medskip
The proof of the implications (3)$\implies$(1) and (3)$\implies$(2) are straightforward once one remarks that (3) implies that $U(V_1\otimes\mathbb C^k)\subset V_1\otimes\mathbb C^k$.
\end{proof}
\begin{remark}
Imposing a finer subalgebra structure to $\mathcal A$ amounts to apply one of the appropriate theorems from previous sections to the unitary bloc $U_{11}$.
\end{remark}
\section{Generating random structured bipartite unitary operators}
\label{sec:algorithms}

We discuss in this section the natural probability measure one can consider on the sets of bipartite unitary operators appearing in Theorems \ref{thm:diagonal-Heisenberg},\ref{thm:diagonal-Schrodinger},\ref{thm:block-diagonal-Heisenberg},\ref{thm:block-diagonal-Schrodinger},\ref{thm:tensor-product-H},\ref{thm:tensor-product-S}. We shall discuss each problem separately, in the order of increasing difficulty. 

Let us start with the set of unitary operators described in Theorem \ref{thm:diagonal-Heisenberg}; these are bipartite operators $U=(U_{ij})_{i,j=1}^n$ with the property that the blocks $U_{ij} \in M_k(\mathbb C)$ are partial isometries satisfying conditions \ref{it:row-final},\ref{it:col-initial} from Definition \ref{def:bistochastic-PI}. Let us denote by $\delta_{ij} = \dim E_{ij}$. From condition \ref{it:col-initial}, we deduce that 
$$\forall j \in [n], \qquad \sum_{i=1}^n \delta_{ij} = k.$$
Since $U_{ij}$ is a partial isometry, we also have $\delta_{ij} = \dim F_{ij}$, and thus, from condition \ref{it:row-final}, we get
$$\forall i \in [n], \qquad \sum_{j=1}^n \delta_{ij} = k.$$
We conclude that the non-negative integer matrix $\delta = (\delta_{ij})_{i,j=1}^n$ has the property that each of its rows and columns forms a partition of $k$. For each such \emph{pattern matrix} $\delta$, we construct the following probability measure on the set of unitary matrices $U$. 

\medskip

\noindent\textbf{Random bipartite unitary operators satisfying  \ref{it:row-final},\ref{it:col-initial} with a given pattern}
\begin{enumerate}
\item \textbf{Input:} A pattern matrix $\delta$.
\item Consider $2n$ i.i.d., Haar distributed unitary random matrices $R_1, \ldots, R_n, C_1, \ldots, C_n \in \mathcal U_k$. Denote by $R_i(j)$ the $j$-th column of the matrix $R_i$; similarly for $C_i(j)$.
\item Define the input spaces $E_{ij}$ as follows:
$$\forall i,j \in [n], \qquad E_{ij} := \operatorname{span} \left\{C_j\left(\sum_{t<i}\delta_{tj} + s\right)\right\}_{s=1}^{\delta_{ij}},$$
and the output spaces $F_{ij}$ as 
$$\forall i,j \in [n], \qquad F_{ij} := \operatorname{span} \left\{R_i\left(\sum_{t<j}\delta_{it} + s\right)\right\}_{s=1}^{\delta_{ij}}.$$
\item For all $i,j \in [n]$, the partial isometry $U_{ij}$ is defined by its action on $E_{ij}$, the orthogonal of its kernel:
$$\forall s \in [\delta_{ij}], \qquad U_{ij} C_j\left(\sum_{t<i}\delta_{tj} + s\right) = R_i\left(\sum_{t<j}\delta_{it} + s\right).$$
\item \textbf{Output:} $U = \sum_{i,j=1}^n e_ie_j^* \otimes U_{ij}$, a random unitary matrix satisfying the conditions from Theorem \ref{thm:block-diagonal-Heisenberg}.
\end{enumerate}

\medskip

Note that the choice of the pattern matrix $\delta$ in the algorithm above is arbitrary. A canonical choice only appears when $k=n$, in which case one can choose $\delta_{ij}=1$, for all $i,j \in [n]$. The case of unitary operators appearing in Theorem \ref{thm:block-diagonal-Heisenberg} is dealt with in a similar manner, we leave the details to the reader. 

Let us now consider bipartite unitary operators that satisfy the conditions from Theorem \ref{thm:tensor-product-H}. Such operators admit a decomposition  $U = (I_d \otimes V) (W \otimes I_r)$ for other unitary matrices $V \in \mathcal U_{rk}$ and $W \in \mathcal U_{dk}$, and it is natural to choose $V$ and $W$ to be independent, Haar distributed in their respective group. Note that this probability measure does not give the full Haar measure on the group $\mathcal U_{drk}$, since the operators $V$ and $W$ ``interact'' only on the space $\mathbb C^k$. Counting parameters, we find that the real dimension of the set of $U$'s is $k^2(d^2+r^2-1)$. 

Regarding operators which give quantum channels preserving tensor product algebras (Theorem \ref{thm:tensor-product-S}), the situation is more complicated, since there is not an obvious way to sample uniformly from the set $\mathcal U_{unital}$ from \eqref{eq:Uunital}, see also \cite[Section 3]{dnp}.  We conjecture that the following algorithm will converge to an element of $\mathcal U_{unital}$. 

\medskip

\noindent\textbf{Sampling from $\mathcal U_{unital}$}
\begin{enumerate}
\item \textbf{Input:} Integers $n,k$ and an error parameter $\varepsilon >0$.
\item Start with a Haar distributed unitary random unitary operator $U \in \mathcal U_{nk}$.
\item While $\|U^\Gamma (U^\Gamma)^* - I_{nk} \|_2 > \varepsilon$, repeat the next step:
\item \qquad $U \leftarrow \operatorname{Pol}(U^\Gamma)$, where $\operatorname{Pol}(X)$ is the unitary operator $V$ appearing in the polar decomposition of $X$: $X = VP$ with $P \geq 0$.
\item \textbf{Output:} $U$, an operator at distance at most $\varepsilon$ from $\mathcal U_{unital}$.
\end{enumerate}

\medskip

In Figure \ref{fig:Uunital-numerics}, we present numerical evidence supporting our conjecture that the algorithm above converges. Note than an obstruction to the convergence of the algorithm would be an example of a matrix $U \in \mathcal U_{nk}$ such that
$$U^\Gamma \neq U = \operatorname{Pol}(U^\Gamma);$$
on such an input, the loop would be stuck on $U \notin \mathcal U_{unital}$. We do not know whether such matrices exist or not. An implementation of the algorithm above can be found at \cite{num}.

\begin{figure}[htbp] 
\includegraphics[width=0.4\textwidth]{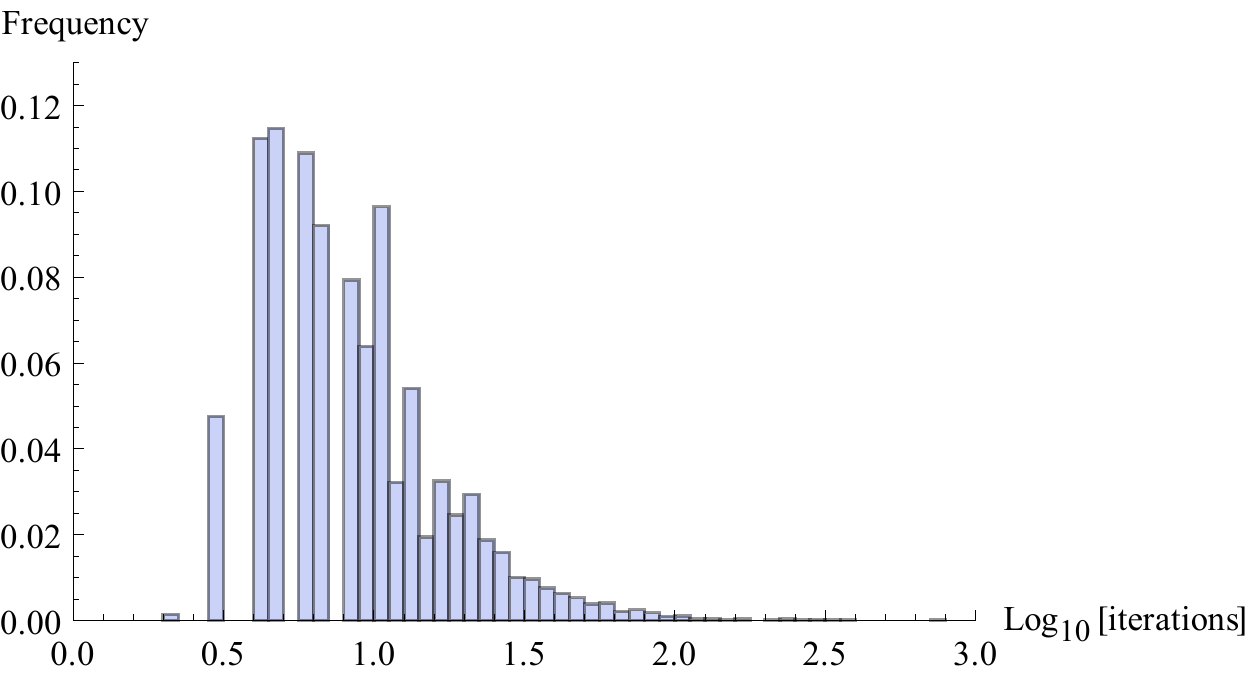} \qquad \includegraphics[width=0.4\textwidth]{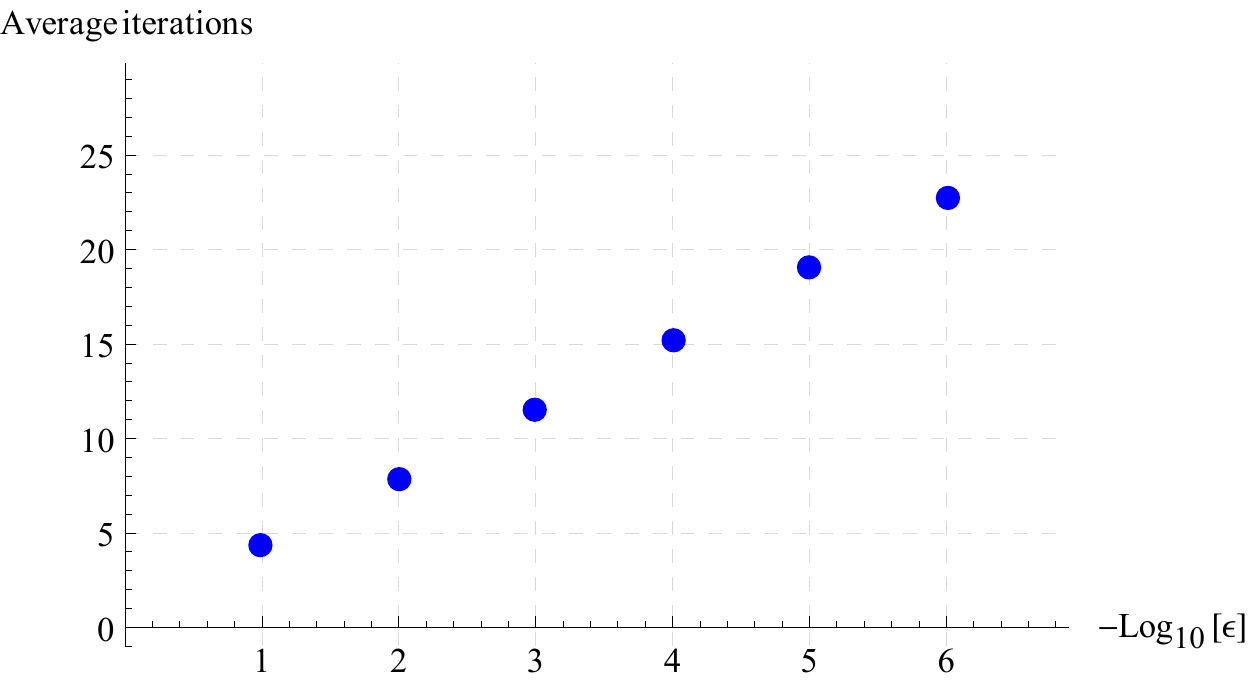}
\caption{Numerical evidence supporting the claim that algorithm that samples from $\mathcal U_{unital}$ converges. On the left panel, a histogram of the logarithm of the number of iterations that the algorithm needs to terminate for $n=k=2$ and $\varepsilon = 10^{-3}$, ran on $10^5$ random input data. On the right panel, for the same size parameters, the average number of steps the algorithm needs to converge, for different values of $\varepsilon$.} 
\label{fig:Uunital-numerics}
\end{figure}

Finally, let us discuss the case of unitary operators producing quantum channels which preserve diagonal subalgebras, see Theorem \ref{thm:diagonal-Schrodinger}. Generating such a matrix of partial isometries is very similar to what we have done in the Heisenberg case, with just one exception: the output subspaces need to partition $\mathbb C^k$ both row-wise and column-wise. Let us assume, for the sake of simplicity, that $n=k$ and $\dim F_{ij}=1$, for all $i,j \in [n]$; we denote by $x_{ij}$ the unit vector (up to a phase) which spans $F_{ij}$. We recover the following definition from \cite{mvi}. 

\begin{definition}\label{def:qls}
A \emph{quantum Latin square} (QLS) of order $n$ is a matrix $X = (x_{ij})_{i,j=1}^n$, where $x_{ij} \in \mathbb C^n$ are such that the vectors on each row (resp.~column) of $X$ form an orthonormal basis of $\mathbb C^n$. 
\end{definition}

We present next a conjectural algorithm for generating QLS. The main idea is to look at the projections on the vectors $x_{ij}$: $p_{ij} = x_{ij} x_{ij}^*$. Then, for some fixed column $j$, the fact that the vectors $\{x_{ij}\}_{i=1}^n$ form an orthonormal basis of $\mathbb C^n$ is equivalent to the following equality (a similar relation holds for the rows): $\sum_{i=1}^n p_{ij} = I_n$. Since this relation is similar to some probabilities summing up to one, we draw a parallel between QLS and bistochastic matrices. Our algorithm is thus an adaptation of the classical Sinkhorn-Knopp algorithm for generating bistochastic matrices \cite{sin,skn}. Although the procedure resembles Gurvits' algorithm \cite{gur} called ``operator scaling'', we have not been able to reduce it to operator scaling. Note also that the algorithm below also appears in the representation theory of the quantum symmetric group $S_N^+$, see \cite{bne}. For a $3$-tensor $X$, define the following $2n$ matrices, which have the vectors $x_{ij}$ as columns:
\begin{align*}
\forall i \in [n], \qquad R_i &= \sum_{j=1}^n x_{ij}e_j^*\\
\forall j \in [n], \qquad C_j &= \sum_{i=1}^n x_{ij}e_i^*.
\end{align*}
We say that $X$ is an $\varepsilon$-QLS if it is close to being a QLS, in the following sense:
$$\sum_{i=1}^n \|R_iR_i^* - I_n \|_2^2 + \sum_{j=1}^n \|C_jC_j^* - I_n\|_2^2 \leq \varepsilon^2.$$ 

\medskip

\noindent\textbf{Non-commutative Sinkhorn algorithm for sampling QLS}
\begin{enumerate}
\item \textbf{Input:} The dimension $n$ and an error parameter $\varepsilon >0$
\item Start with $x_{ij}$ independent uniform points on the unit sphere of $\mathbb C^n$.
\item While $X$ is not an $\varepsilon$-QLS, do the  steps (\ref{it:rows}-\ref{it:swap})
\item \label{it:rows}$\qquad$ Define the matrix $Y$ by making the rows of $X$ unitary:
$$\forall i \in [n], \qquad y_{ij} = \operatorname{Pol}\left( \sum_{s=1}^n x_{is}e_s^*  \right) \cdot e_j.$$
\item \label{it:cols}$\qquad$ Define the matrix $Z$ by making the rows of $Y$ unitary:
$$\forall j \in [n], \qquad z_{ij} = \operatorname{Pol}\left( \sum_{s=1}^n y_{sj}e_s^*  \right) \cdot e_i.$$
\item \label{it:swap}$\qquad$ $X \leftarrow Z$.
\item \textbf{Output:} $X$, an $\varepsilon$-QLS.
\end{enumerate}

\medskip

We conjecture that the algorithm above converges almost surely (for a computer implementation, see \cite{num}). Again, the proof of this important result is elusive at this time, but we have numerical evidence (see Figure \ref{fig:QLS-numerics}), as well as a proof of a relaxed version, where we replace the rank-1 projectors $p_{ij}$ with positive definite operators, see Appendix \ref{sec:app-Sinkhorn-positive}.

\begin{figure}[htbp] 
\includegraphics[width=0.4\textwidth]{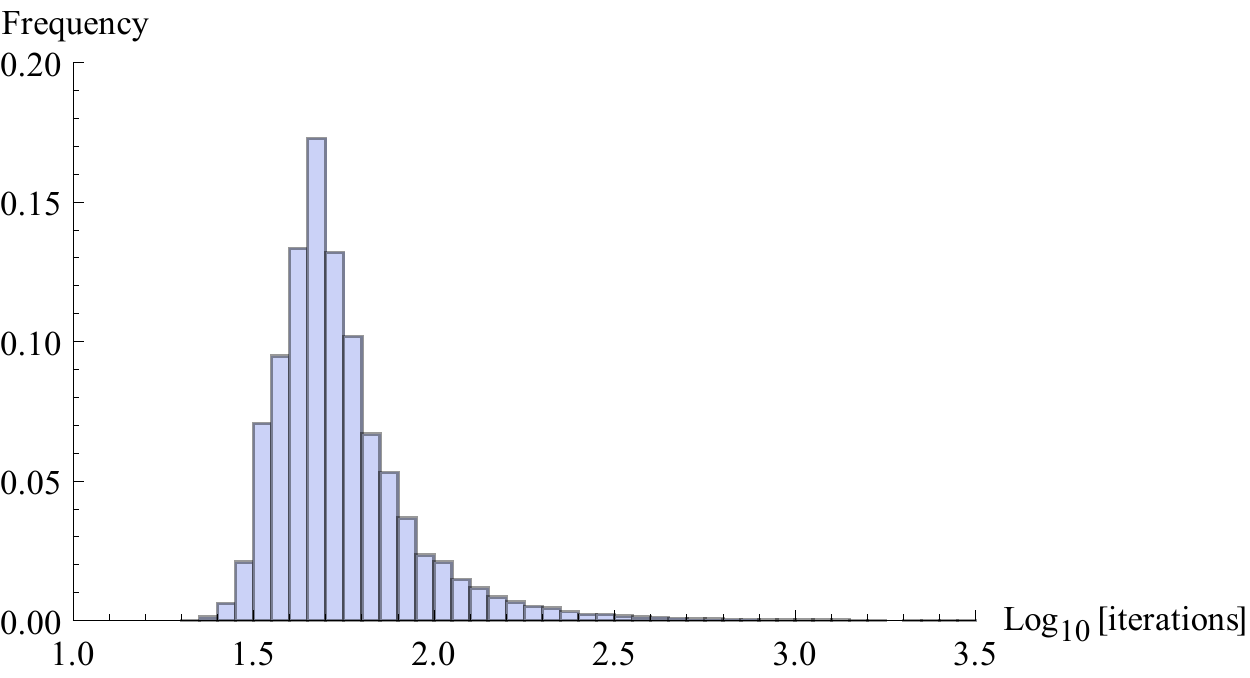} \qquad \includegraphics[width=0.4\textwidth]{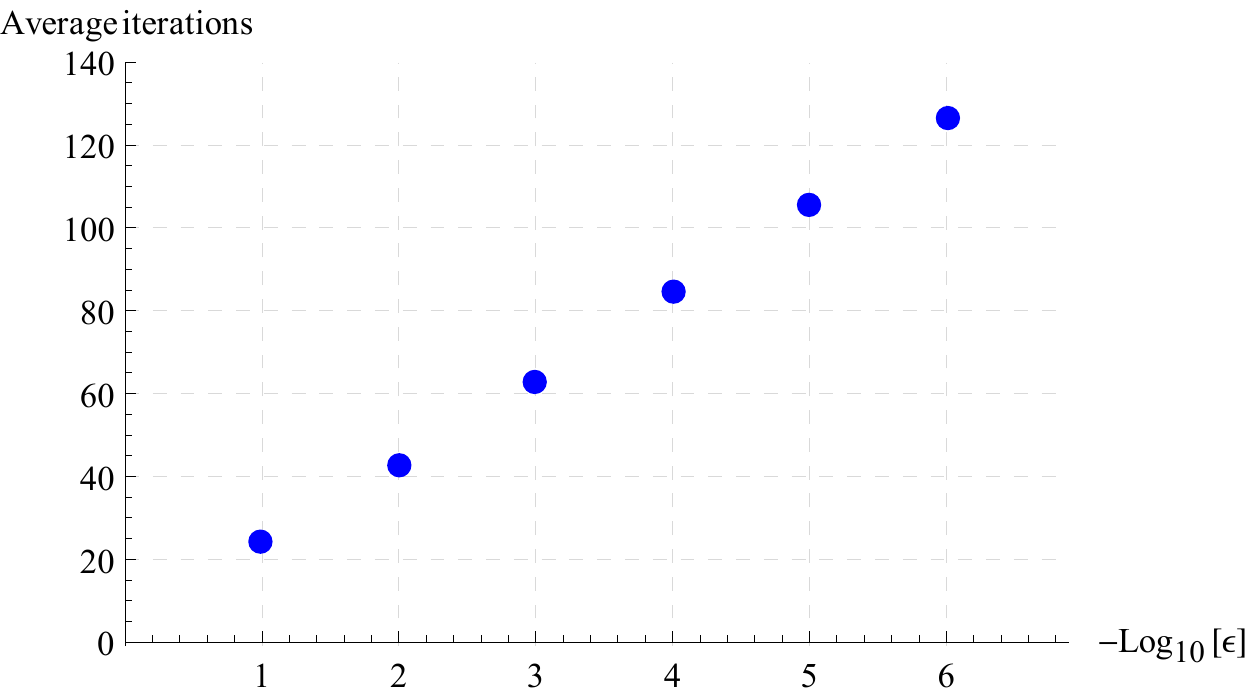}
\caption{Numerical evidence supporting the claim that algorithm that samples Quantum Latin Squares converges. On the left panel, a histogram of the logarithm of the number of iterations that the algorithm needs to terminate for $n=2$ and $\varepsilon = 10^{-3}$, ran on $10^5$ random input data. On the right panel, for the same size parameter, the average number of steps the algorithm needs to converge, for different values of $\varepsilon$.} 
\label{fig:QLS-numerics}
\end{figure}

\appendix
\section{A non-commutative Sinkhorn algorithm: the invertible case}\label{sec:app-Sinkhorn-positive}

We present in this section a non-commutative version of the classical Sinkhorn algorithm \cite{sin,skn}. The goal of the algorithm is to produce, starting from a matrix $X = (X_{ij})_{i,j=1}^n$ of positive definite blocks $M_k(\mathbb C) \ni X_{ij}>0$ another block matrix $Y=(Y_{ij})$ having the following two properties:
\begin{align*}
\forall i \in [n], \quad &\sum_{j=1}^n Y_{ij} = I_k\\
\forall j \in [n], \quad &\sum_{i=1}^n Y_{ij} = I_k.
\end{align*}
A block matrix having the two properties above will be called a \emph{block-bistochastic} matrix. As in the classical case, we shall need to define an approximate notion of block-bistochasticity: given some positive $\varepsilon > 0$, we call $Y$ $\varepsilon$-block-bistochastic if 
\begin{align*}
\forall i \in [n], \quad &\|I_k - \sum_{j=1}^n Y_{ij} \| \leq \varepsilon\\
\forall j \in [n], \quad &\|I_k - \sum_{i=1}^n Y_{ij} \| \leq \varepsilon.
\end{align*}

Let us introduce now the non-commutative variant of the classical Sinkhorn algorithm. 

\medskip

\noindent\textbf{Non-commutative Sinkhorn algorithm, the invertible case}
\begin{enumerate}
\item \textbf{Input:} A block matrix $X \in M_n(M_k(\mathbb C))$ with positive definite blocks $X_{ij} > 0$ and some positive precision parameter $\varepsilon >0$
\item Let $Y=X$
\item While $Y$ is not $\varepsilon$-block-bistochastic, do the  steps (\ref{it:step-rows}-\ref{it:step-swap})
\item \label{it:step-rows}$\qquad$ Define the block matrix $Y'$ by normalizing the rows of $Y$:
$$Y'_{ij} = \left(\sum_{s=1}^n Y_{is}\right)^{-1/2} Y_{ij} \left(\sum_{s=1}^n Y_{is}\right)^{-1/2} $$
\item \label{it:step-cols}$\qquad$ Define the block matrix $Y''$ by normalizing the columns of $Y'$:
$$Y''_{ij} = \left(\sum_{s=1}^n Y'_{sj}\right)^{-1/2} Y'_{ij} \left(\sum_{s=1}^n Y'_{sj}\right)^{-1/2} $$
\item \label{it:step-swap}$\qquad$ Let $Y = Y''$
\item \textbf{Output:} $Y$, a $\varepsilon$-block-bistochastic matrix.
\end{enumerate}

\medskip

Before proving that the algorithm above finishes after a finite number of steps, let us note that the inverses used in steps (\ref{it:step-rows}) and (\ref{it:step-cols}) are well-defined. Indeed, it is easy to check that at each step, the matrices $Y,Y',Y''$ have positive definite  blocks (we assume that the input $X$ has this property), and thus their row-sums and column-sums are also positive definite (and thus invertible) $k \times k$ matrices. 

\begin{proposition}
For any input $X$ having positive definite blocks and any precision parameter $\varepsilon>0$, the algorithm above stops after a finite number of steps.
\end{proposition}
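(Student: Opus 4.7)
The plan is to exhibit a bounded monotone potential on the iterates and extract the conclusion via a quantitative AM--GM argument. Let
\[\Phi(Y) := \prod_{i,j=1}^n \det(Y_{ij}),\]
which is positive and finite on block matrices with positive definite blocks (a property preserved at every step, since sums and inverse-square-root conjugations preserve the positive definite cone). Under the row step $\det(Y'_{ij}) = \det(Y_{ij})/\det(S_i)$, and under the column step $\det(Y''_{ij}) = \det(Y'_{ij})/\det(T'_j)$, so one full iteration gives
\[\Phi(Y'') = \Phi(Y) \cdot \prod_{i=1}^n \det(S_i)^{-n} \cdot \prod_{j=1}^n \det(T'_j)^{-n}.\]

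The key observation is that after any row step the row sums are $I_k$, so the column sums $T'_j = \sum_i Y'_{ij}$ satisfy $\sum_j \operatorname{tr}(T'_j) = \sum_{i,j}\operatorname{tr}(Y'_{ij}) = nk$. Applying AM--GM first to the eigenvalues of each $T'_j$ and then to the scalars $\operatorname{tr}(T'_j)/k$, one obtains
\[\prod_{j=1}^n \det(T'_j)^{1/k} \leq \prod_{j=1}^n \frac{\operatorname{tr}(T'_j)}{k} \leq \left(\frac{1}{n}\sum_{j=1}^n \frac{\operatorname{tr}(T'_j)}{k}\right)^{\!n} = 1.\]
The analogous inequality $\prod_i \det(S_i) \leq 1$ holds whenever the column sums of $Y$ are already $I_k$ (so from the second iteration on). Hence the sequence $\Phi(Y^{(t)})$ is non-decreasing from the first half-step onwards. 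It is also bounded above by $1$: immediately after any half-step each block is dominated by an identity marginal, hence has eigenvalues in $[0,1]$ and determinant at most $1$. The monotone bounded sequence therefore converges, and its increments tend to zero, forcing $\prod_i \det(S_i^{(t)}) \to 1$ and $\prod_j \det(T_j^{(t)}) \to 1$ along the algorithm.

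Equality throughout the AM--GM chain above occurs exactly when every $S_i$ (respectively $T_j$) equals $I_k$, so a quantitative stability version of the chain yields $\|S_i^{(t)} - I_k\| \to 0$ and $\|T_j^{(t)} - I_k\| \to 0$, whence the $\varepsilon$-block-bistochasticity condition is met after finitely many steps. I expect the main technical obstacle to be this last quantitative passage: converting the vanishing of the AM--GM defect into an operator-norm estimate on the $S_i$'s and $T_j$'s. It can be handled either by explicit estimates (relating the eigenvalue deviation from $1$ to the determinantal defect) or, more conceptually, by a compactness argument: since $\Phi$ stays bounded away from $0$ by monotonicity and the blocks are bounded in operator norm, the trajectory lies in a compact subset of the open positive definite cone, so every convergent subsequence converges to a fixed point, and any such fixed point must be block-bistochastic.
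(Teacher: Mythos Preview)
Your approach is essentially the same as the paper's: the paper uses the identical potential $F(X)=\prod_{i,j}\det X_{ij}$, proves the same monotonicity via the AM--GM chain on the row/column sums, bounds $F$ above along the trajectory, and then invokes a quantitative AM--GM stability result (citing Kober) to turn the vanishing determinantal defect into $\|S_i-I_k\|\to 0$. Your two proposed endings---explicit eigenvalue estimates or compactness---are both viable; the paper takes the first route, and your compactness alternative (blocks trapped in a compact subset of the positive cone because $\Phi$ is bounded below and each block is $\leq I_k$) is a legitimate, slightly softer variant of the same final step.
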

\begin{proof}
Our proof is a natural extension of the proof in \cite{aar}. On the set $\mathcal X_{n,k}$ of $n \times n$ block matrices with strictly positive $k \times k$ blocks $X_{ij}$, we introduce the function $F : \mathcal X_{n,k} \to [0,\infty)$ given by
$$F(X) = \prod_{i,j=1}^n \det X_{ij}.$$

Let us now consider a matrix $Y \in \mathcal X_{n,k}$ with the property that the columns of $Y$ are normalized, i.e. $\forall j \in [n]$, $\sum_{i=1}^n Y_{ij} = I_k$. We claim that $F(Y) \leq 1$. Indeed, this is a consequence of the arithmetic-geometric (AG) mean inequality, as follows:
\begin{align*} 
F(Y) &= \prod_{i,j=1}^n \det Y_{ij} \leq \prod_{i,j=1}^n \left( \frac{\operatorname{Tr} Y_{ij}}{n} \right)^n\\
&\leq \left( \frac{1}{n^2} \sum_{i,j=1}^n \frac{1}{n} \operatorname{Tr} Y_{ij}\right)^{n^3}\\
&= \left( \frac{1}{n^3} \sum_{j=1}^n  \operatorname{Tr} \sum_{i=1}^n Y_{ij}\right)^{n^3}  = n^{-n^3} \leq 1.
\end{align*}

So, if we denote by $Y_1, Y_2, \ldots$ the matrices $Y$ which enter the algorithm at step (\ref{it:step-rows}), then we have $F(Y_r) \leq 1$, for all $r \geq 2$. We show now that, along the ``trajectory'' $Y_1, Y_2, \ldots$, the functional $F$ is increasing. We shall assume that $r \geq 2$ ($Y$ entering step (\ref{it:step-rows}) has normalized columns) and we shall prove that $F(Y) \leq F(Y')$. The inequality $F(Y') \leq F(Y'')$ can be proved in a similar manner. We have 
$$F(Y') = \prod_{i,j=1}^n \det Y'_{ij} =  \prod_{i,j=1}^n \frac{\det Y_{ij}}{\det \sum_{s=1}^n Y_{is}} = \frac{F(Y)}{\left( \prod_{i=1}^n \det \sum_{s=1}^n Y_{is}\right)^n},$$
and thus it is enough to show that
\begin{align} 
\label{eq:correction-factor-Sinkhorn-first} \left( \prod_{i=1}^n \det \sum_{s=1}^n Y_{is}  \right)^n &\leq \left(\prod_{i=1}^n \left( \frac{1}{n} \operatorname{Tr} \sum_{s=1}^n Y_{is}\right)^n\right)^n\\
\label{eq:correction-factor-Sinkhorn-second}  &\leq \left( \frac{1}{n} \sum_{i=1}^n \frac{1}{n} \sum_{s=1}^n \operatorname{Tr} Y_{is} \right)^{n^3} \\
&= \left( \frac{1}{n^2} \sum_{s=1}^n \operatorname{Tr} \sum_{i=1}^nY_{is} \right)^{n^3}= 1.
\end{align}
We have thus shown that the function $r \mapsto F(Y_r)$ is increasing and bounded along a run $Y_1, Y_2, \ldots$ of the algorithm, so it must converge to some value $f_\infty \in [0,1]$. In particular, the quantity appearing on the left hand side of \eqref{eq:correction-factor-Sinkhorn-first} converges to 1, as $r \to \infty$. As a consequence, the AG mean inequalities which were used in \eqref{eq:correction-factor-Sinkhorn-first} were almost equalities: given $\varepsilon$, there is some $R \geq 2$ such that, for all $r \geq R$ and all $i \in [n]$, 
$$(1-\varepsilon)\left( \frac{1}{n} \operatorname{Tr} \sum_{s=1}^n Y_{is}\right)^n\leq \det \sum_{s=1}^n Y_{is}  \leq \left( \frac{1}{n} \operatorname{Tr} \sum_{s=1}^n Y_{is}\right)^n.$$
It is a classical fact (see, e.g.~\cite{kob}) that this implies that that the numbers appearing in the AG inequality should be, up to some constants depending on $n$, $\varepsilon$-close to their mean; in our case this translates to the matrices $\sum_{i=1}^s Y_{is}$ being close to a multiple of the identity, which is our $\varepsilon$-block-bistochastic condition.
\end{proof}


\begin{thebibliography}{99}

\bibitem{aar}
Aaronson, S.
{\it Quantum computing and hidden variables.}
Phys. Rev. A, 71(3), 032325 (2005).

\bibitem{adler}
Adler, S. L., Brody, D. C., Brun T. A., Hughston L. P.
{\it Martingale models for quantum state reduction.}
J. Phys. A 34, 8795 (2001) 

\bibitem{bne}
Banica, T, Nechita, I.
{\it Universal models for quantum permutation groups.}
Preprint arXiv:1602.04456.

\bibitem{bar}
Bardet, I.
{\it Quantum extensions of dynamical systems and of Markov semigroups.}
Preprint arXiv:1509.04849.

\bibitem{num}
Benoist, T., Nechita, I.
{\it Computer implementation of some algorithms appearing in the current paper.}
See supplementary materials.

\bibitem{bb}
Bauer, M., Bernard, D.
{\it Convergence of repeated quantum nondemolition measurements and wave-function collapse.}
Phys. Rev. A 84, 044103 (2011)

\bibitem{bbb}
Bauer M, Benoist T and Bernard D 
{\it Repeated quantum non-demolition measurements: convergence and continuous-time limit}
Ann. H. Poincar\'e 14, 639--679 (2013)

\bibitem{bpe}
Benoist, T., Pellegrini, C.
{\it Large Time Behavior and Convergence Rate for Quantum Filters under Standard Non Demolition Conditions.}
Comm. Math. Phys. 331, 703--723 (2014)

\bibitem{coc}
Cochran, W. G.
{\it The distribution of quadratic forms in a normal system, with applications to the analysis of covariance.}
Math. Proc. Cambridge 30, 178--191 (1934).

\bibitem{dav}
Davidson, K.R.
{\it $C^*$-algebras by example.}
Am. Math. Soc. (1996).

\bibitem{dnp}
Deschamps, J., Nechita, I., Pellegrini, C.
{\it On some classes of bipartite unitary operators.}
J. Phys. A 49, 335301 (2016).

\bibitem{gur}
Gurvits, L.
{\it Classical complexity and quantum entanglement.}
Journal of Computer and System Sciences 69.3, pp 448--484 (2004).

\bibitem{kob}
Kober, H.
{\it On the arithmetic and geometric means and on H\"older's inequality.}
P. Am. Math. Soc. 9, 452--459 (1958).

\bibitem{mvi}
Musto, B., Vicary, J.
{\it Quantum Latin squares and unitary error bases.}
Preprint arXiv:1504.02715.

\bibitem{nch}
Nielsen, M., and Chuang, I. 
{\it Quantum computation and quantum information.} 
Cambridge University Press, 2010.

\bibitem{pau}
Paulsen, V.
{\it Completely bounded maps and operator algebras.}
Cambridge University Press (2002).

\bibitem{sin}
Sinkhorn, R.
{\it A relationship between arbitrary positive matrices and doubly stochastic matrices.}
Ann. Math. Stat., 35 876--879 (1964).

\bibitem{skn}
Sinkhorn, R. and Knopp, P.
{\it Concerning nonnegative matrices and doubly stochastic matrices.}
Pac. J. Math. 21, 343--348 (1967).

\bibitem{sti}
Stinespring, W. F. 
{\it Positive functions on $C^*$-algebras.}
P. Am. Math. Soc. 6, 211--216 (1955).

\bibitem{stockton}
Stockton, J., van Handel, R., Mabuchi, H.
{\it Deterministic Dick-state preparation with continuous measurement and control.}
Phys. Rev. A 70, 022106 (2004)

\bibitem{vanhandel}
van Handel, R., Stockton, J., Mabuchi, H.
{\it Feedback control of quantum state reduction.}
IEEE Trans. Autom. Control 50, 768 (2005)

\bibitem{wol}
Wolf, M. 
{\it Quantum channels \& operations: Guided tour.}
Lecture notes available  \href{http://www-m5.ma.tum.de/foswiki/pub/M5/Allgemeines/MichaelWolf/QChannelLecture.pdf}{online}, July 2012.

\end{thebibliography}
\end{document}